\documentclass[10pt,journal]{IEEEtran}

\usepackage{amsmath}
\usepackage{amssymb}
\usepackage{mathrsfs}
\usepackage{mathdots}
\usepackage{bbm}
\usepackage{mathtools}
\usepackage{bm}
\usepackage{amsfonts}
\usepackage{cases}
\usepackage{graphicx}
\usepackage{cite}
\usepackage{multirow}
\usepackage{color}
\usepackage{algorithm}
\usepackage{algpseudocode}
\usepackage{graphicx}
\usepackage{subfigure}
\usepackage{epstopdf}
\usepackage{lipsum}
\usepackage{amsthm}
\usepackage{url}
\usepackage{tikz}
\usetikzlibrary{positioning,shapes}
\graphicspath{{./}{Fig/}}

\usepackage{hyperref}

\DeclareMathOperator*{\argmax}{argmax}
\DeclareMathOperator*{\argmin}{argmin}

\hyphenpenalty=500  %
\hyphenation{MPC NRMPC sta-te}

\algnewcommand\algorithmicswitch{\textbf{switch}}
\algnewcommand\algorithmiccase{\textbf{case}}
\algnewcommand\algorithmicassert{\texttt{assert}}
\algnewcommand\Assert[1]{\State \algorithmicassert(#1)}%

\algnewcommand{\Initialize}[1]{%
  \State \textbf{Initialize:}
  \Statex \hspace*{\algorithmicindent}\parbox[t]{.8\linewidth}{\raggedright #1}
}
\algnewcommand{\Inputs}[1]{%
  \State \textbf{Inputs:}
  \Statex \hspace*{\algorithmicindent}\parbox[t]{.8\linewidth}{\raggedright #1}
}

\algnewcommand{\Try}[1]{%
  \State \textbf{Try:}
  \Statex \hspace*{\algorithmicindent}\parbox[t]{.8\linewidth}{\raggedright #1}
}

\algdef{SE}[SWITCH]{Switch}{EndSwitch}[1]{\algorithmicswitch\ #1\ \algorithmicdo}{\algorithmicend\ \algorithmicswitch}%
\algdef{SE}[CASE]{Case}{EndCase}[1]{\algorithmiccase\ #1}{\algorithmicend\ \algorithmiccase}%
\algtext*{EndSwitch}%
\algtext*{EndCase}%

\newtheorem{definition}{Definition}
\newtheorem{lemma}{Lemma}

\newtheorem{proposition}{Proposition}
\newtheorem{theorem}{Theorem}
\newtheorem{remark}{Remark}
\newtheorem{example}{Example}

\newtheorem{problem}{Problem}

\theoremstyle{definition}
\newtheorem*{example*}{Example}

\begin{document}

\title{\huge Continuous-time control synthesis under nested signal temporal logic specifications}
\author{ Pian Yu, Xiao Tan, and Dimos V. Dimarogonas
\thanks{This work was supported in part by the Swedish Research Council (VR), the Swedish Foundation for Strategic Research (SSF), the Knut and Alice Wallenberg Foundation (KAW), the ERC CoG LEAFHOUND project, and the EU CANOPIES project.}
\thanks{Pian Yu is currently at the Department of Computer Science, University of Oxford, United Kingdom. She was at the KTH Royal Institute of Technology when this work was conducted. Xiao Tan and Dimos V. Dimarogonas are with School of Electrical Engineering and Computer Science, KTH Royal Institute of Technology, 10044 Stockholm, Sweden.
        {\tt\small pian.yu@cs.ox.ac.uk, xiaotan,dimos@kth.se}}
}

\maketitle
\thispagestyle{empty}

\begin{abstract}
In this work, we propose a novel approach for the continuous-time control synthesis of nonlinear systems under nested signal temporal logic (STL) specifications. While the majority of existing literature focuses on control synthesis for STL specifications without nested temporal operators, addressing nested temporal operators poses a notably more challenging scenario and requires new theoretical advancements. 
Our approach hinges on the concepts of signal temporal logic tree (sTLT) and control barrier function (CBF). Specifically, we detail the construction of an sTLT from a given STL formula and a continuous-time dynamical system, the sTLT semantics (i.e., satisfaction condition), and the equivalence or under-approximation relation between sTLT and STL. 
Leveraging the fact that the satisfaction condition of an sTLT is essentially keeping the state within certain sets during certain time intervals, it provides explicit guidelines for the CBF design. The resulting controller is obtained through the utilization of an online CBF-based program coupled with an event-triggered scheme for online updating the activation time interval of each CBF, with which the correctness of the system behavior can be established by construction. We demonstrate the efficacy of the proposed method for single-integrator and unicycle models under nested STL formulas.
\end{abstract}

\begin{IEEEkeywords}
Signal temporal logic, control barrier function, control synthesis, continuous-time nonlinear systems
\end{IEEEkeywords}

\ifCLASSOPTIONcaptionsoff
  \newpage
\fi

\section{Introduction}
High level formal languages, originated from computer science for the specification and verification of computer programs \cite{ Baier2008}, have attracted increasing attention  to a wider audience over the last decades, ranging from biological networks \cite{yan2022interpretable, sanwal2017combining} to robotics \cite{ kress2009temporal,  belta2007symbolic }. Temporal logics, such as Linear Temporal Logic (LTL)  and Signal Temporal Logic (STL) \cite{ maler2004 }, provide a rigorous, mathematical language characterizing the expected behaviors of the systems. LTL focuses on the Boolean satisfaction of events over a discrete-time state series. As a comparison, STL allows for characterizing system properties over dense time, and thus more favorable for continuous-time dynamical systems, e.g., robotic and cyber-physical system applications \cite{bartocci2018specification, eddeland2017objective}.

Designing control strategies for systems to satisfy high level specifications is known as the control synthesis problem. For LTL specifications, the classic automaton-based control synthesis scheme has been well-studied \cite{tabuada2009verification, belta2017formal} for hybrid and discrete-time dynamical systems. In recent years, several different control synthesis schemes are proposed for STL specifications. One popular approach is to evaluate the satisfaction of the STL specification over the sampled time instants, encode it as a mixed-integer program (MIP), and then solve it in a model predictive control framework \cite{raman2014model, raman2015reactive, sadraddini2015robust}. However, the exponential computational complexity with respect to the number of integer variables makes this approach difficult to be applied to STL formulas with long time horizons even for small dimensional dynamical systems. To address the exponential
complexity of integer-based optimization, 
recent work proposes to smoothly approximate the robustness metric of STL, and then sequential quadratic programming \cite{gilpin2020smooth} or convex-concave programming \cite{takayama2023signal} is proposed to find a solution. In \cite{lee2021signal}, STL formulas are interpreted over stochastic processes and the STL synthesis is reformulated as a probabilistic inference problem. Nevertheless, all these results are restricted to discrete-time systems.

There are some endeavours in recent years on the continuous-time  control synthesis problem for STL specifications, including, to name a few, the control barrier function-based  \cite{lindemann2018, lindemann2019,buyukkocak2022control}, automaton-based \cite{lindemann2020,ho2022automaton}, heuristic-based \cite{ mehdipour2018spatial}, sampling-based \cite{ vasile2017}, and learning-based  \cite{ yan2021neural,kapoor2020, varnai2019} methods.         
Different from the discrete-time control synthesis methods, most of the aforementioned approaches only can handle  STL formulas with non-nested temporal operators (we will refer to these formulas as non-nested STL for simplicity in the following). To be more specific, the CBF-based method \cite{lindemann2018}  deals with a fragment of non-nested STL formulas and linear predicates. The recent work in \cite{buyukkocak2022control} considers a richer STL fragment and provides heuristics on the decomposition and the ordering of sub-tasks which are then used to construct CBFs.  In \cite{ho2022automaton}, the sampling-based automaton-guided control synthesis approach allows the consideration of nonlinear predicates, yet it is still restricted to non-nested STL formulas. In \cite{lindemann2020}, a fragment of signal interval temporal logic formulas is considered for the automaton-based control synthesis. Moreover, the timed abstraction of the dynamical system is needed, which is based on the
assumption of existing feedback control laws. The case of STL formulas with nested temporal operators is substantially more challenging and requires new theoretical advancements. To the best of our knowledge, the continuous-time control synthesis for STL specifications with nested temporal operators is still an open problem.
  
In this work, we aim to develop an efficient control synthesis approach for continuous-time dynamical systems under STL specifications with nested temporal operators, e.g., $\mathsf{G}_{[a_1, b_1]}\mathsf{F}_{[a_2, b_2]}\mu$, $\mathsf{F}_{[a_1, b_1]} \mathsf{G}_{[a_2, b_2]} \mu $, $\mathsf{F}_{[a_1, b_1]}(\mu_1\mathsf{U}_{[a_2, b_2]}(\mathsf{F}_{[a_3, b_3]}\mu_2))$. 
Compared to previous CBF-based control synthesis works  \cite{lindemann2018, lindemann2019,buyukkocak2022control}, we provide a tangible tool, coined as the \emph{signal  temporal logic tree (sTLT)}, that explicitly transforms the satisfaction of an STL formula to a series of set invariance conditions, which naturally guides the design of corresponding CBFs.
 The main contributions of this work are summarized as follows. 1) We introduce a notion of sTLT, detail its construction from a given STL formula, its semantics (i.e., satisfaction condition),  and establish equivalence or under-approximation relation between sTLT and STL. 2) We show how to design CBFs and online update their activation time intervals under the guidance of the sTLT. The control synthesis scheme is given by an online CBF-based program. 3) We deduce the correctness of the system behavior under certain assumptions.

The remainder of this paper is organized as follows.
In Sec. II, we give some technical preliminaries  and introduce the continuous-time control synthesis problem. In Sec. III,  the notion of sTLT is introduced as well as its semantics. Then, we derive the equavilence or under-approximation relation between an STL
formula and its constructed sTLT. Finally, we show how to design the CBFs, online update their activation time intervals, and the  overall control synthesis scheme. Case studies with single integrator and unicycle dynamics are presented in Sec. IV. The work is then concluded in Sec. V.

\section{Preliminaries and Problem Formulation}

\textbf{Notation.} Let $\mathbb{R}:=(-\infty, \infty)$, $\mathbb{R}_{\ge 0}:=[0, \infty)$, and $\mathbb{N}:=\{0,1,2,\ldots\}$. Denote $\mathbb{R}^n$ as the $n$ dimensional real vector space, $\mathbb{R}^{n\times m}$ as the $n\times m$ real matrix space. Throughout this paper, vectors are denoted in italics, $x\in \mathbb{R}^n$, and boldface $\bm{x}$ is used for continuous-time signals. Let $\|x\|$ and $\|A\|$ be the Euclidean norm of vector $x$ and matrix $A$. Given a set $S\subset \mathbb{R}^n$, $\overline{S}$ denotes its complement and $\partial S$ denotes its boundary. Given a point $x\in \mathbb{R}^n$ and a set $S\subset \mathbb{R}^n$, the distance function is defined as $\texttt{dist}(x, S):=\inf_{y\in S}\|x-y\|$. The signed distance function $\texttt{sdist}(x, S)$ is defined as 
\begin{equation*}
    \texttt{sdist}(x, S)=\begin{cases}
    - \texttt{dist}(x,\overline{S}), & \text{if } x\in S, \\
    \texttt{dist}(x,S), & \text{if } x\notin S.
\end{cases}
\end{equation*}

Consider a continuous-time dynamical system of the form
\begin{equation}\label{x0}
\Sigma: \dot{x}=f(x, u),
\end{equation}
where $x\in \mathbb{R}^n$ and $u\in U\subseteq \mathbb{R}^m$ are respectively the state and input of the system, the function $f: \mathbb{R}^n\times \mathbb{R}^m\to \mathbb{R}^n$ is locally Lipschitz continuous in $x$ and $u$.

Let $\mathcal{U}$ be the set of all measurable functions that take their values in $U$ and are defined on $\mathbb{R}_{\ge 0}$. A curve $\bm{x}: \mathbb{R}_{\ge 0} \to \mathbb{R}^n$ is said to be a trajectory of (\ref{x0}) if there exists an input signal $\bm{u}\in \mathcal{U}$ satisfying (\ref{x0}) for almost all $t\in \mathbb{R}_{\ge 0}$. We use $\bm{x}_{x_0}^{\bm{u}}(t)$ to denote the trajectory point reached at time $t$ under the input signal $\bm{u}$ from the initial state $x_0$.

\subsection{Signal temporal logic}

Signal temporal logic (STL) \cite{maler2004} is a predicate logic based on continuous-time signals. When $\bm{x}: \mathbb{R}_{\ge 0}\to \mathbb{R}^n$ is considered, the predicate $\mu$ at time $t$ is obtained after evaluation of a predicate function $g_\mu: \mathbb{R}^n\to \mathbb{R}$ as follows
\begin{equation*}
   \mu:=\left\{\begin{aligned}
   \top, & \quad \text{if } \quad g_\mu(\bm{x}(t))\ge 0 \\
   \bot, & \quad \text{if } \quad g_\mu(\bm{x}(t))<0.
   \end{aligned}\right.
\end{equation*}

In \cite{sadraddini2015robust}, it was shown that each STL formula has an equivalent
STL formula in positive normal form (PNF), \textit{i.e.,} negations only occur adjacent to predicates. The syntax of the PNF STL is given by
\begin{equation}\label{Def:PNF}
  \begin{aligned}
&\hspace{0cm}\varphi ::= \top \mid \mu \mid \neg \mu \mid \varphi_1 \wedge \varphi_2 \mid \varphi_1 \vee \varphi_2 \\
&\hspace{3cm}\mid  \varphi_1\mathsf{U}_{[a, b]} \varphi_2\mid  \mathsf{F}_{[a, b]} \varphi\mid  \mathsf{G}_{[a, b]} \varphi,
 \end{aligned}
\end{equation}
where $\varphi, \varphi_1, \varphi_2$ are STL formulas and $[a,b], 0\le a\le b< \infty,$ denotes a time interval. Here, $\wedge$ and $\vee$ are logic operators ``conjunction" and ``disjunction", $\mathsf{U}_{[a, b]}$, $\mathsf{F}_{[a, b]}$, and $\mathsf{G}_{[a, b]}$ are temporal operators ``until", ``eventually", and ``always", respectively.

\begin{definition}[STL semantics \cite{raman2015reactive}]\label{STLsemantics}
The validity of an STL formula $\varphi$ with respect to a continuous-time
signal $\bm{x}$ evaluated at time $t$, is defined inductively as follows:
  \begin{eqnarray*}
  (\bm{x}, t) \vDash \mu &\Leftrightarrow& g_\mu(\bm{x}(t))\ge 0, \\
   (\bm{x}, t) \vDash  \neg  \mu &\Leftrightarrow& \neg((\bm{x}, t) \vDash \mu), \\
   (\bm{x}, t) \vDash \varphi_1 \wedge \varphi_2 &\Leftrightarrow& (\bm{x}, t) \vDash \varphi_1 \wedge  (\bm{x}, t) \vDash \varphi_2, \\
   (\bm{x}, t) \vDash \varphi_1 \vee \varphi_2 &\Leftrightarrow& (\bm{x}, t) \vDash \varphi_1 \vee  (\bm{x}, t) \vDash \varphi_2, \\
   (\bm{x}, t) \vDash \varphi_1 \mathsf{U}_{[a, b]} \varphi_2 &\Leftrightarrow& \exists {t'}\in [t+a, t+b]  \ \text{s.t.} \\
   & &(\bm{x}, {t'}) \vDash \varphi_2 \wedge \\
   & &\forall {t''}\in [t, t'],(\bm{x}, {t''}) \vDash \varphi_1, \\
   (\bm{x}, t) \vDash \mathsf{F}_{[a, b]} \varphi &\Leftrightarrow& \exists {t'}\in [t+a, t+b]  \ \text{s.t.} \\
   & &(\bm{x}, {t'}) \vDash \varphi, \\
   (\bm{x}, t) \vDash \mathsf{G}_{[a, b]} \varphi &\Leftrightarrow& \forall {t'}\in [t+a, t+b]  \ \text{s.t.} \\
   & &(\bm{x}, {t'}) \vDash \varphi.
  \end{eqnarray*}
\end{definition}

\begin{definition}\label{Def:feasibility}
  Consider the dynamical system $\Sigma$ in (\ref{x0}) and the STL formula $\varphi$ in (\ref{Def:PNF}). We say \emph{$\varphi$ is satisfiable from the initial state $x_0$} if there exists a control signal $\bm{u}\in \mathcal{U}$ such that
  $({\bm{x}}_{x_0}^{\bm{u}}, 0) \vDash \varphi.$
\end{definition}

Given an STL formula $\varphi$, the set of initial states from which $\varphi$ is satisfiable is denoted by
\begin{equation}\label{initialsatisfiableset}
  \mathbb{S}_{\varphi}:=\{x_0\in \mathbb{R}^n| \text{$\varphi$ is satisfiable from $x_0$}\}.
\end{equation}
For simplicity, we will refer to $\mathbb{S}_{\varphi}$ as \emph{the satisfying set} for $\varphi$ in the following. Please be aware that the computation of the set $\mathbb{S}_{\varphi}$ is tailored to the dynamical system $\Sigma$ under consideration. Here we omit it for notation simplicity.

\subsection{Reachability operators}

In this section, we define two reachability operators.

\begin{definition}\label{Def:maxreachset}
   Consider the system (\ref{x0}), a set $\mathcal{S} \subseteq \mathbb{R}^n$, and a time interval $[a, b]$. The \emph{maximal reachable set} $\mathcal{R}^M(\mathcal{S}, [a, b])$ is defined as
    \begin{multline*}
    \mathcal{R}^M(\mathcal{S},[a, b])=\Big\{x_0\in \mathbb{R}^n: \exists \bm{u}\in \mathcal{U}, \exists t'\in [a, b], \\
      \text{s.t. }\;  {\bm{x}}_{x_0}^{\bm{u}}(t')\in \mathcal{S} \Big\}.
    \end{multline*}
\end{definition}

\begin{definition}\label{Def:minreachset}
   Consider the system (\ref{x0}), the set $\mathcal{S}\subseteq \mathbb{R}^n$, and a time interval $[a, b]$. The \emph{minimal reachable set} $\mathcal{R}^m(\mathcal{S},[a, b])$ is defined as
    \begin{multline*}
    \mathcal{R}^m(\mathcal{S},[a, b])=\Big\{x_0\in \mathbb{R}^n: \forall \bm{u}\in \mathcal{U}, \exists t'\in [a, b], \\
      \text{s.t. }\; {\bm{x}}_{x_0}^{\bm{u}}(t')\in \mathcal{S} \Big\}.
    \end{multline*}
\end{definition}

The set $\mathcal{R}^M(\mathcal{S},[a, b])$ collects all states in $\mathbb{R}^n$ from which there exists an input signal $\bm{u}\in \mathcal{U}$ that drives the system to target set $\mathcal{S}$ at some time instant $t'\in [a, b]$. The set $\mathcal{R}^m(\mathcal{S},[a, b])$ collects all states in $\mathbb{R}^n$ from which no matter what input signal $\bm{u}\in \mathcal{U}$ is applied, the system can reach the target set $\mathcal{S}$ at some time instant $t'\in [a, b]$.

Let $\mathcal{S}$ be represented by the zero superlevel set of a continuous function: $\mathcal{S}=\{x\in \mathbb{R}^n: h_{\mathcal{S}}(x)\ge 0\}$. Similarly, let $\mathcal{R}^M(\mathcal{S},[a, b])$ and $\mathcal{R}^m(\mathcal{S},[a, b])$ be represented by the zero superlevel set of some continuous functions, i.e.,  	\begin{equation*}
\begin{aligned}
\mathcal{R}^M(\mathcal{S},[a, b])&:=\{x: h_{\mathcal{R}^M(\mathcal{S},[a, b])}(x)\ge 0\},\\
\mathcal{R}^m(\mathcal{S},[a, b])&:=\{x: h_{\mathcal{R}^m(\mathcal{S},[a, b])}(x)\ge 0\}.		\end{aligned}
\end{equation*} 
As shown in \cite{chen2018}, the calculation of maximal and minimal reachable sets can be casted as an optimal control problem, given below: 
	\begin{equation*}
		\begin{aligned}
			h_{\mathcal{R}^M(\mathcal{S},[a, b])}&(x)=\max_{\bm{u}\in \mathcal{U}}\max_{s\in [a, b]}h_{\mathcal{S}}(\bm{x}_{x}^{\bm{u}}(s)),\\
			h_{\mathcal{R}^m(\mathcal{S},[a, b])}&(x)=\min_{\bm{u}\in \mathcal{U}}\max_{s\in [a, b]}h_{\mathcal{S}}(\bm{x}_{x}^{\bm{u}}(s)).
		\end{aligned}
	\end{equation*}

In the following, relations are established between the STL temporal operators $\mathsf{F}_{[a, b]}$ and $\mathsf{G}_{[a, b]}$ and the maximal/minimal reachable sets. 

\begin{lemma}[\hspace{-0.4mm}\cite{chen2018}]\label{lem2}
	Given the system (\ref{x0}) and the STL predicate $\mu_1$, one has
	\begin{itemize}
		\item[i)] $\mathbb{S}_{\mathsf{F}_{[a, b]}\mu_1}= \mathcal{R}^M(\mathbb{S}_{\mu_1}, [a, b])$, and
		\item[ii)] $\mathbb{S}_{\mathsf{G}_{[a, b]} \mu_1}=  \overline{\mathcal{R}^m(\overline{\mathbb{S}_{\mu_1}}, [a, b])}$,
	\end{itemize}
where $\mathbb{S}_{\mathsf{F}_{[a, b]}\mu_1}$ and $\mathbb{S}_{\mathsf{G}_{[a, b]}\mu_1}$ are the satisfying sets for $\mathsf{F}_{[a, b]}\mu_1$ and $\mathsf{G}_{[a, b]}\mu_1$, respectively.
\end{lemma}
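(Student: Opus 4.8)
The plan is to unwind the STL semantics of $\mathsf{F}_{[a,b]}\mu_1$ and $\mathsf{G}_{[a,b]}\mu_1$ evaluated at time $0$, match them against the definitions of the maximal and minimal reachable sets, and use the basic fact that $x_0\in\mathbb{S}_\mu$ iff $g_\mu(x_0)\ge 0$, i.e. iff $\bm{x}(0)\in\mathbb{S}_\mu$ for the trajectory starting at $x_0$.

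For part i), I would start from Definition~\ref{Def:feasibility}: $x_0\in\mathbb{S}_{\mathsf{F}_{[a,b]}\mu_1}$ means there exists $\bm{u}\in\mathcal{U}$ with $(\bm{x}_{x_0}^{\bm{u}},0)\vDash\mathsf{F}_{[a,b]}\mu_1$. By the STL semantics for $\mathsf{F}$, this is equivalent to: there exists $\bm{u}\in\mathcal{U}$ and $t'\in[a,b]$ such that $(\bm{x}_{x_0}^{\bm{u}},t')\vDash\mu_1$, i.e., $g_{\mu_1}(\bm{x}_{x_0}^{\bm{u}}(t'))\ge 0$, i.e., $\bm{x}_{x_0}^{\bm{u}}(t')\in\mathbb{S}_{\mu_1}$. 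This last condition is verbatim the membership condition for $\mathcal{R}^M(\mathbb{S}_{\mu_1},[a,b])$ in Definition~\ref{Def:maxreachset}, so the two sets coincide. One subtlety to note here is that $\mathbb{S}_{\mu_1}=\{x: g_{\mu_1}(x)\ge 0\}$ is a static set (no dynamics involved for a predicate), which is why the reachable-set operator applied to it is well-defined.

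For part ii), I would similarly write $x_0\in\mathbb{S}_{\mathsf{G}_{[a,b]}\mu_1}$ iff there exists $\bm{u}\in\mathcal{U}$ such that for all $t'\in[a,b]$, $\bm{x}_{x_0}^{\bm{u}}(t')\in\mathbb{S}_{\mu_1}$. Then I would take the contrapositive/complement view: $x_0\notin\mathbb{S}_{\mathsf{G}_{[a,b]}\mu_1}$ iff for every $\bm{u}\in\mathcal{U}$ there exists $t'\in[a,b]$ with $\bm{x}_{x_0}^{\bm{u}}(t')\notin\mathbb{S}_{\mu_1}$, i.e., $\bm{x}_{x_0}^{\bm{u}}(t')\in\overline{\mathbb{S}_{\mu_1}}$. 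By Definition~\ref{Def:minreachset}, this is exactly $x_0\in\mathcal{R}^m(\overline{\mathbb{S}_{\mu_1}},[a,b])$. Hence $\overline{\mathbb{S}_{\mathsf{G}_{[a,b]}\mu_1}}=\mathcal{R}^m(\overline{\mathbb{S}_{\mu_1}},[a,b])$, and taking complements gives the claim. The $\exists\bm{u}\,\forall t'$ versus $\forall\bm{u}\,\exists t'$ quantifier swap under negation is the only place where care is needed, but it is a routine logical manipulation; I expect no genuine obstacle here, since the predicate case keeps the target set independent of the control and the result is essentially a restatement of the definitions.
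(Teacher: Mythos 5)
Your argument is correct. The paper itself gives no proof of this lemma --- it is imported verbatim from \cite{chen2018} --- so there is nothing to compare against at the level of technique; what you have written is the natural definition-unwinding argument that the citation stands in for. Both directions check out: in i) the chain $x_0\in\mathbb{S}_{\mathsf{F}_{[a,b]}\mu_1}$ $\Leftrightarrow$ $\exists\bm{u},\exists t'\in[a,b]$ with $\bm{x}_{x_0}^{\bm{u}}(t')\in\mathbb{S}_{\mu_1}$ is literally the membership condition of Definition~\ref{Def:maxreachset}, and in ii) the $\exists\bm{u}\,\forall t'$ versus $\forall\bm{u}\,\exists t'$ duality under complementation is exactly what Definition~\ref{Def:minreachset} is set up to capture; your observation that $\mathbb{S}_{\mu_1}=\{x:g_{\mu_1}(x)\ge 0\}$ is static (so the reachability operator is applied to a fixed target set) is the only point that genuinely needs saying, and you said it.
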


\begin{definition}
Given any STL formula $\varphi$, let its satisfying set $\mathbb{S}_\varphi$ in \eqref{initialsatisfiableset} be represented by the zero superlevel set of a function $h_{\mathbb{S}_\varphi}: \mathbb{R}^n\to \mathbb{R}$, i.e., $\mathbb{S}_\varphi=\{x\in \mathbb{R}^n: h_{\mathbb{S}_\varphi}(x)\ge 0\}$. We refer to such function $h_{\mathbb{S}_\varphi}$ as a \emph{value function} associated with the STL formula $\varphi$. 
\end{definition}
Note that given a predicate $\mu$,  $g_\mu$ is a value function associated with $\mu$. In the general case, we can use the signed distance function to denote a value function, i.e., $h_{\mathbb{S}_\varphi}(x) = -  \texttt{sdist}(x, {\mathbb{S}_\varphi})$.

\subsection{Time-varying control barrier functions}

Define a differentiable function $\mathfrak{b}: X \times [t_0, t_1] \to \mathbb{R}$ and the associated set
\begin{equation}\label{ct}
  \mathcal{C}(t):=\{x\in X | \mathfrak{b}(x, t)\ge 0\}.
\end{equation}
Then, we have the following definition.

\begin{definition} [CBF \cite{ames2016control}] \label{def:cbf}
A function $\mathfrak{b}: X \times [t_0, t_1] \to \mathbb{R}$ is called a \emph{valid control barrier function (vCBF)} for (\ref{x0}) if there exists a locally Lipschitz continuous class
$\mathcal{K}$ function $\alpha$ such that, for all $(x, t)\in \mathcal{C}(t)\times [t_0, t_1]$,
  \begin{equation}\label{eq:cbf_condition}
    \sup_{u\in U} \left\{\frac{\partial \mathfrak{b}(x, t)}{\partial x}f(x, u) +\frac{\partial \mathfrak{b}(x, t)}{\partial t}\right\} \ge -\alpha(\mathfrak{b}(x, t)).
  \end{equation}
\end{definition}
If $x_0 \in \mathcal{C}(t_0)$ and ${\mathfrak{b}(x, t)}$ is a vCBF, then any locally Lipschitz control $\bm{u}$ satisfying \eqref{eq:cbf_condition} guarantees $\bm{x}_{x_0}^{\bm{u}}(t)\in \mathcal{C}(t)$ for all $t\in [t_0,t_1]$. This can be shown by, for example, invoking the Comparison Lemma \cite{khalil1996nonlinear}.

\subsection{Problem formulation}

Before moving on, we first introduce the notion of nested STL formulas.

\begin{definition}[Nested STL formula]\label{nestedSTL}
We call an STL formula $\varphi$ \emph{nested} if it can be written in one of the following forms:
\begin{eqnarray}
\varphi=\mathsf{F}_{[a, b]}\varphi_1, \label{F}\\
\varphi=\mathsf{G}_{[a, b]}\varphi_1,\label{G}\\
\varphi=\varphi_1\mathsf{U}_{[a, b]}\varphi_2, \label{U}
\end{eqnarray}
where $\varphi_1$ in (\ref{F})-(\ref{G}) and at least one of $\varphi_1$ and $\varphi_2$ in (\ref{U}) include temporal operators. In addition, $\varphi_1$ in (\ref{F})-(\ref{G}) and  $\varphi_1, \varphi_2$ in (\ref{U}) are called the \emph{argument(s)}  of the STL formula $\varphi$.
\end{definition}

Examples of nested STL formulas include $\mathsf{F}_{[a_1, b_1]}\mathsf{G}_{[a_2, b_2]}\mu, \mathsf{G}_{[a_1, b_1]}\mathsf{F}_{[a_2, b_2]}\mu$, and $\mu_1\mathsf{U}_{[a_1, b_1]}(\mathsf{G}_{[a_2, b_2]}\mu_2\wedge \mathsf{F}_{[a_3, b_3]}\mu_3)$, etc.

In \cite{lindemann2018,lindemann2019}, continuous-time control-affine system of the form 
\begin{equation}\label{control-affine}
    \dot x=f(x)+g(x)u
\end{equation}
is considered, and appropriate CBFs are designed for a fragment of non-nested STL formulas, which we briefly recap here:
\begin{itemize}
  \item For $\mathsf{G}_{[a, b]}\mu_1$, select $\mathfrak{b}(x, t)$ s.t. $\mathfrak{b}(x, t')\le g_{\mu_1}(x)$ for all $t'\in [a, b]$,
  \item For $\mathsf{F}_{[a, b]}\mu_1$, select $\mathfrak{b}(x, t)$ s.t. $\mathfrak{b}(x, t')\le g_{\mu_1}(x)$ for some $t'\in [a, b]$.
  \item For $\mu_1\mathsf{U}_{[a, b]}\mu_2$, it is encoded as $\mathsf{G}_{[0, b]}\mu_1\wedge \mathsf{F}_{[a, b]}\mu_2$,
\end{itemize}
where $g_{\mu_1}$ and $g_{\mu_2}$ are the predicate functions of $\mu_1$ and $\mu_2$, respectively. Once an vCBF is obtained (the explicit CBF construction is investigated in \cite{lindemann2019}) for the non-nested STL formula, then the control strategy for (\ref{control-affine}) is given by solving a quadratic program (QP)
\begin{eqnarray} 
  &&\hspace{-0.2cm}\min_{u\in U} \quad u^TQu \nonumber\\
  &&\hspace{-1cm}\text{s.t.} \; \frac{\partial b(x, t)}{\partial x}(f(x)+g(x)u) +\frac{\partial b(x, t)}{\partial t} \ge -\alpha(b(x, t)).\label{constraint}
\end{eqnarray}

In this work, we consider the continuous-time control synthesis for nested STL formulas as per Definition \ref{nestedSTL}. Formally, the problem is stated as follows.

\begin{problem}\label{problem}
  Consider the dynamical system in (\ref{x0}) and a nested STL formula $\varphi$. Derive a continuous-time control strategy $\bm{u}$ such that the resulting trajectory $\bm{x}$ of (\ref{x0}) with initial state $x_0$ satisfies $\varphi$, i.e.,
  $(\bm{x}_{x_0}^{\bm{u}}, 0) \models \varphi$.
\end{problem}

\section{Solving the Control Synthesis Problem}

In this work, we aim to formulate the continuous-time control synthesis problem for a nested STL specification $\varphi$ as a CBF-based program as in \cite{lindemann2018}. Here, the difficulty is: how to encode the task satisfaction constraint (i.e., $(\bm{x}_{x_0}^{\bm{u}}, 0) \models \varphi$) as a set of constraints on the system input $u$ when $\varphi$ is nested? Appropriate CBFs have been  proposed in \cite{lindemann2018} for control-affine systems under non-nested  STL formula $\varphi$, e.g., $\varphi=\mathsf{F}_{[a, b]}\mu$. However, when the STL formula $\varphi$ is nested, extending the CBF design methodology in \cite{lindemann2018} to nested STL formulas is nontrivial. 

To tackle this problem, in this work, we propose the notion of sTLT. This tree structure serves as a tool for guiding the design of CBFs for nested STL formulas.

This section is structured as follows. First, we introduce the notion of sTLT and its construction in Section III. A. Then we define sTLT semantics in Section III.B. The equivalence or under-approximation relation between STL and sTLT is discussed in Section III. C. Then, we explain the design of the CBFs based on the sTLT in Section III. D. In Section III. E, we show the overall algorithm. Finally in Section III. F, the computational complexity of the overall approach is discussed.

\subsection{sTLT and its construction} \label{subsec:treeconstruction}

An sTLT refers to a tree with linked set nodes and operator nodes. The formal definition is given as follows.

\begin{definition}[sTLT]
An \emph{sTLT} is a tree for which the next holds:
\begin{itemize} 
\item each node is either a \emph{set} node that is a subset of $\mathbb{R}^n$ or an \emph{operator} node that belongs to $\{\wedge, \vee, \mathsf{U}_{[a, b]},\mathsf{F}_{[a, b]},\mathsf{G}_{[a, b]}\}$;
\item the root node and the leaf nodes are \emph{set} nodes;
\item if a \emph{set} node is not a leaf node, its unique child is an \emph{operator} node;
\item the children of any \emph{operator} node are \emph{set} nodes.
  \end{itemize}
\end{definition}

The sTLT is motivated by the notion of TLT defined in \cite{gao2021} for LTL formulas. Although graphically similar, the sTLT construction and its satisfaction condition are substantially
different from TLT in \cite{gao2021}. We will provide additional clarification regarding the differences later in Remark \ref{Rem1}. 

\subsubsection*{Construct an sTLT from an STL formula $\varphi$} Before presenting the construction procedure of such an sTLT from a given STL formula $\varphi$ and a continuous-time dynamical system $\Sigma$, we give the following definition.

\begin{definition}[Desired form]\label{def:desired}
   Given an STL formula $\varphi$ in Definition \ref{Def:PNF}, we say \emph{$\varphi$ is in desired form} if i) it contains no ``until" operators and ii) the argument of every ``always" operator contains no ``disjunction" operator.
\end{definition}

\begin{figure*}[t]
\centering
\subfigure[]{
\includegraphics[width=0.15\textwidth]{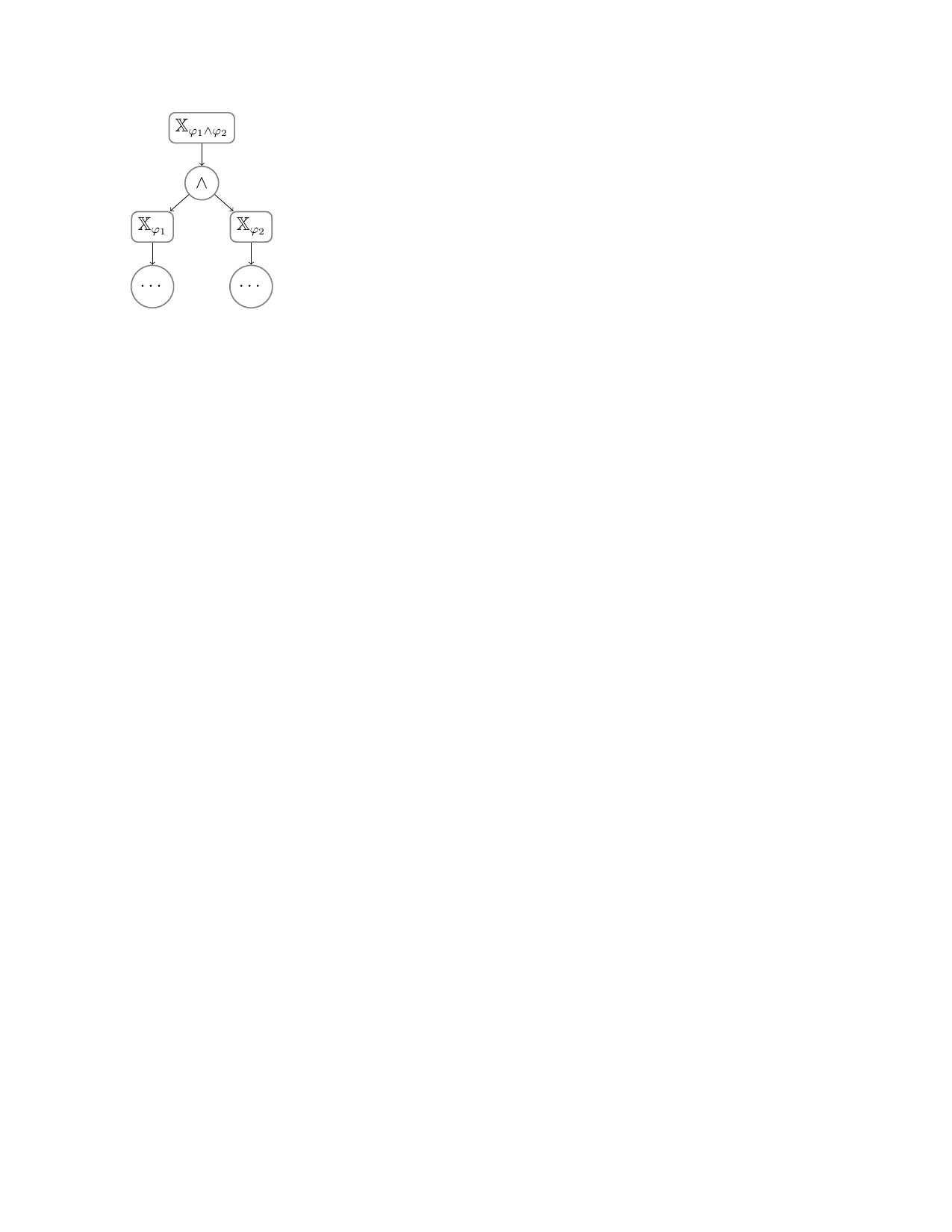}}
	\label{Fig:wedge}
\subfigure[]{	\includegraphics[width=0.15\textwidth]{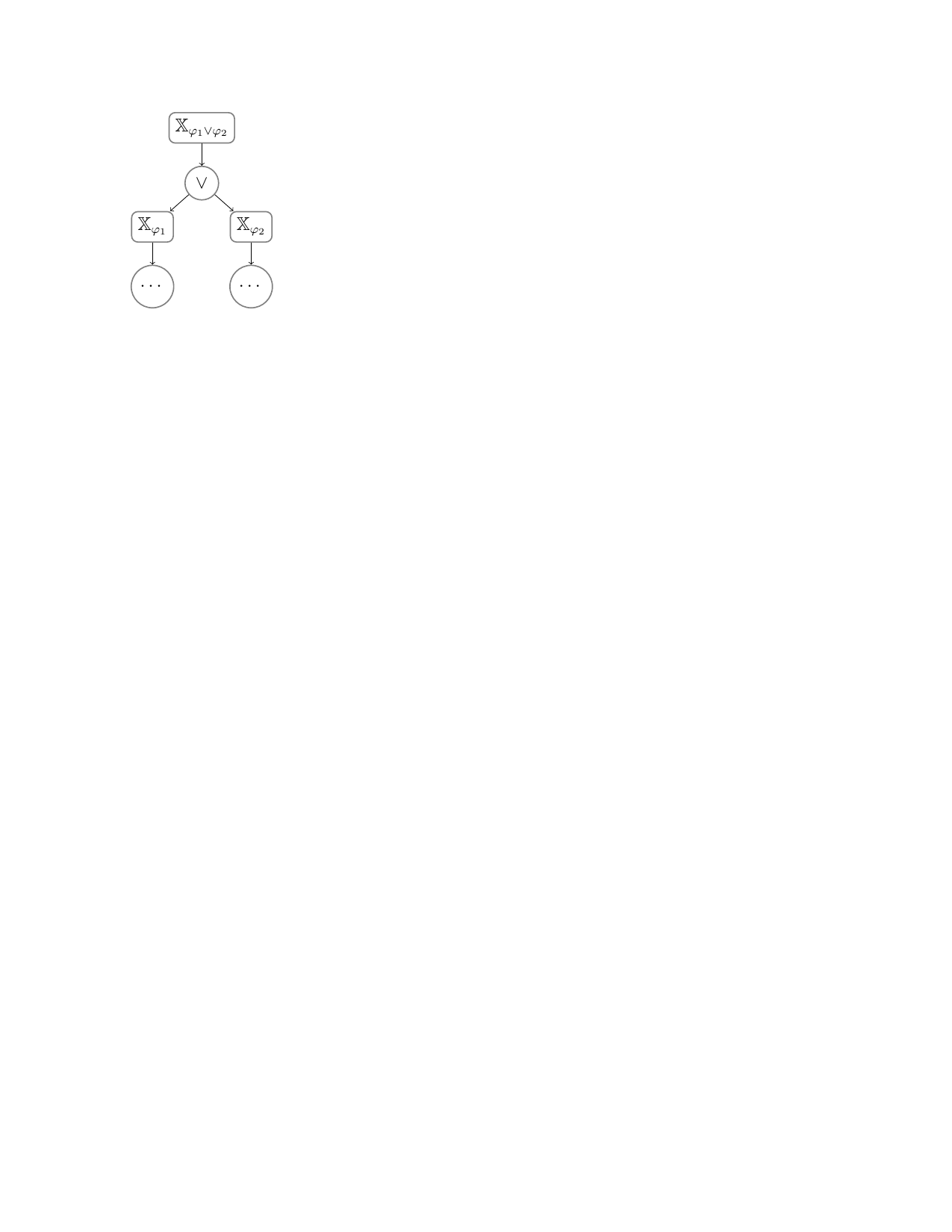}}
	\label{Fig:vee}
\subfigure[]{	\includegraphics[width=0.25\textwidth]{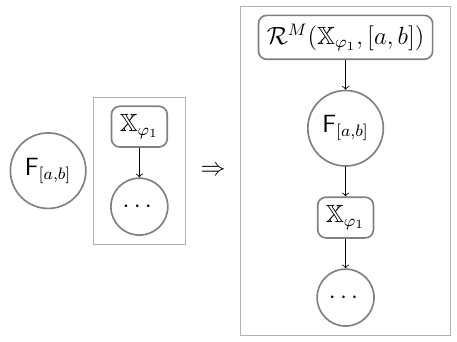}}
	\label{Fig:eventually}
\subfigure[]{	\includegraphics[width=0.25\textwidth]{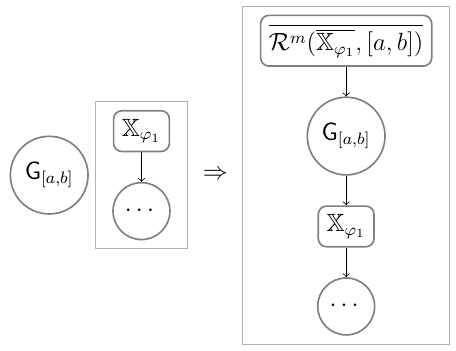}}
	\label{Fig:always}
\caption{The sTLT construction for: (a) $\varphi_1\wedge \varphi_2$; (b) $\varphi_1\vee \varphi_2$; (c) $\mathsf{F}_{[a, b]}\varphi_1$; (d) $\mathsf{G}_{[a, b]}\varphi_1$. The circles denote the operator nodes and the rectangles denote the set nodes.} \vspace{-0.2cm}
\end{figure*}

Next we detail the construction of sTLT from an STL formula $\varphi$ using the reachability operators $\mathcal{R}^M$ and $\mathcal{R}^m$, 
which can be completed in 3 steps.

\emph{Step 1:} Rewrite the STL formula $\varphi$ into the desired form $\hat\varphi$ as per Definition \ref{def:desired}. That is, i) if $\varphi$ contains ``until" operator, e.g., $\varphi=\varphi_1\mathsf{U}_{[a, b]} \varphi_2$, it is encoded as $\hat\varphi=\mathsf{G}_{[0, b]}\varphi_1 \wedge \mathsf{F}_{[a, b]}\varphi_2$ and ii) if the argument of a temporal operator contains a ``disjunction" operator, e.g., $\varphi=\Theta_{[a, b]}(\varphi_1\vee \varphi_2)$, it is encoded as $\hat\varphi=\Theta_{[a, b]}\varphi_1\vee \Theta_{[a, b]}\varphi_2$, where $\Theta \in \{\mathsf{G}, \mathsf{F}\}$. After this step, one has that $\hat\varphi$ contains no ``until" operator and the ``disjunction" operator, if it exists, appears in the form of $\hat{\varphi} = \varphi_1 \vee \varphi_2 \vee \ldots \vee \varphi_N$, and $\varphi_i, i= 1, 2, \ldots, N,$ contain no ``disjunction" operator. We call the fragment of STL formulas $\hat\varphi$, identified by Definition \ref{def:desired}, \emph{desired form}. This is because we will later observe that the constructed sTLT $\mathcal{T}_{\hat\varphi}$ is equivalent to $\hat\varphi$ in the sense that every trajectory that satisfies the sTLT $\mathcal{T}_{\hat\varphi}$ also satisfies the STL formula $\hat\varphi$, and conversely.

\emph{Step 2:} For each predicate $\mu$ or its negation $\neg \mu$, construct the sTLT with only a single set node $\mathbb{X}_{\mu}=\mathbb{S}_{\mu}=\{x: g_{\mu}(x)\ge 0\}$ or $\mathbb{X}_{\neg \mu}=\mathbb{S}_{\neg\mu}=\{x: -g_{\mu}(x)\ge 0\}$. The sTLT of $\top$ or $\bot$ has only a single set node, which is $\mathbb{R}^n$ or $\emptyset$, respectively.

\emph{Step 3:} Construct the sTLT $\mathcal{T}_{\hat\varphi}$ inductively. More specifically, for given STL formulas $\varphi_1$ and $\varphi_2$ and their corresponding constructed sTLTs $\mathcal{T}_{\varphi_1}, \mathcal{T}_{\varphi_2}$, the sTLT from a) $\varphi_1 \wedge \varphi_2$, b) $\varphi_1 \vee \varphi_2$, c) $\mathsf{F}_{[a, b]} \varphi_1$, and d) $\mathsf{G}_{[a, b]} \varphi_1$  can be constructed following the rules detailed below. Denote by $\mathbb{X}_{\varphi_1}:=\{x: h_{\mathbb{X}_{\varphi_1}}\ge 0\}$ and $\mathbb{X}_{\varphi_2}:=\{x: h_{\mathbb{X}_{\varphi_2}}\ge 0\}$ the root nodes of $\mathcal{T}_{\varphi_1}$ and $\mathcal{T}_{\varphi_2}$, respectively.

Case a): Boolean operator $\wedge$. The sTLT $\mathcal{T}_{\varphi_1 \wedge \varphi_2}$ can be constructed by connecting $\mathbb{X}_{\varphi_1}$ and $\mathbb{X}_{\varphi_2}$ through the operator node $\wedge$ and taking $$\mathbb{X}_{\varphi_1\wedge \varphi_2}:=\{x:  (h_{\mathbb{X}_{\varphi_1}}\ge 0) \wedge (h_{\mathbb{X}_{\varphi_2}}\ge 0)\}$$ to be the root node. An illustrative diagram for $\varphi_1 \wedge \varphi_2$ is given in Fig. 1(a).

Case b): Boolean operator $\vee$. The sTLT $\mathcal{T}_{\varphi_1 \vee \varphi_2}$ can be constructed by connecting $\mathbb{X}_{\varphi_1}$ and $\mathbb{X}_{\varphi_2}$ through the operator node $\vee$ and taking $$\mathbb{X}_{\varphi_1\vee \varphi_2}:=\{x:  (h_{\mathbb{X}_{\varphi_1}}\ge 0) \vee (h_{\mathbb{X}_{\varphi_2}}\ge 0)\}$$ to be the root node. An illustrative diagram for $\varphi_1 \vee \varphi_2$ is given in Fig. 1(b).

Case c): Eventually operator $\mathsf{F}_{[a, b]}$. The sTLT $\mathcal{T}_{\mathsf{F}_{[a, b]}{\varphi_1}}$ can be constructed by connecting  $\mathbb{X}_{\varphi_1}$ through the operator $\mathsf{F}_{[a, b]}$ and making the set $\mathcal{R}^M(\mathbb{X}_{\varphi_1}, \mathbb{R}^n, [a, b])$ the root node. An illustrative diagram for $\mathsf{F}_{[a, b]}\varphi_1$ is given in Fig. 1(c).

Case d): Always operator $\mathsf{G}_{[a, b]}$. 
The sTLT $\mathcal{T}_{\mathsf{G}_{[a, b]}{\varphi_1}}$ can be constructed by connecting  $\mathbb{X}_{\varphi_1}$ through the operator $\mathsf{G}_{[a, b]}$ and making the set $\overline{\mathcal{R}^m(\overline{\mathbb{X}_{\varphi_1}}, [a, b])}$ the root node. An illustrative diagram for $\mathsf{G}_{[a, b]}\varphi_1$ is given in Fig. 1(d).

Thus we complete the construction of an sTLT from a STL formula $\varphi$. In what follows, if not stated otherwise, we will use $\hat{\varphi}$ as the desired form of $\varphi$ obtained from Step 1 and $\mathcal{T}_{\hat{\varphi}}$ as the constructed sTLT for brevity. 

Let us use the following example to show how to construct an sTLT from a nested STL formula.      
                                      
\begin{example}\label{example1}
Consider the nested STL formula $\varphi=\mathsf{F}_{[0, 15]}(\mathsf{G}_{[2, 10]}\mu_1 \vee \mu_2  \mathsf{U}_{[5, 10]}\mu_3)$, where $\mu_i, i=\{1,2,3\}$ are predicates. Following Step 1, we can rewrite $\varphi$ into the desired form $\hat \varphi= \mathsf{F}_{[0, 15]}\mathsf{G}_{[2, 10]}\mu_1 \vee  \mathsf{F}_{[0, 15]}(\mathsf{G}_{[0, 10]}\mu_2  \wedge \mathsf{F}_{[5, 10]}\mu_3)$. The constructed sTLT $\mathcal{T}_{\hat\varphi}$ is plotted in Fig. \ref{Fig:example}. Recall that the sTLT is constructed in a bottom-up manner, i.e., we first construct the leaf nodes corresponding to the three predicates, i.e., $\mathbb{X}_5=\mathbb{S}_{\mu_1}$, $\mathbb{X}_8=\mathbb{S}_{\mu_2}$, $\mathbb{X}_9=\mathbb{S}_{\mu_3}$, and then build upon them one can compute 
\begin{eqnarray*}
	&& \mathbb{X}_3=\overline{\mathcal{R}^m(\overline{\mathbb{X}_5}, [2, 10])},\\
	&& \mathbb{X}_1=\mathcal{R}^M(\mathbb{X}_3, [0, 15]),\\
	&& \mathbb{X}_6=\overline{\mathcal{R}^m(\overline{\mathbb{X}_8}, [0, 10])},\\
	&& \mathbb{X}_7=\mathcal{R}^M(\mathbb{X}_9, [5, 10]),\\
	&& \mathbb{X}_4=\{x: h_{\mathbb{X}_{6}}(x)\ge 0 \wedge h_{\mathbb{X}_{7}}(x)\ge 0\},\\
	&& \mathbb{X}_2=\mathcal{R}^M(\mathbb{X}_{4}, [0, 15]),\\
	&& \mathbb{X}_0=\{x: h_{\mathbb{X}_{1}}(x)\ge 0 \vee h_{\mathbb{X}_{2}}(x)\ge 0\}.
\end{eqnarray*}
\end{example}

\begin{figure}[t]
\centering	\includegraphics[width=0.2\textwidth]{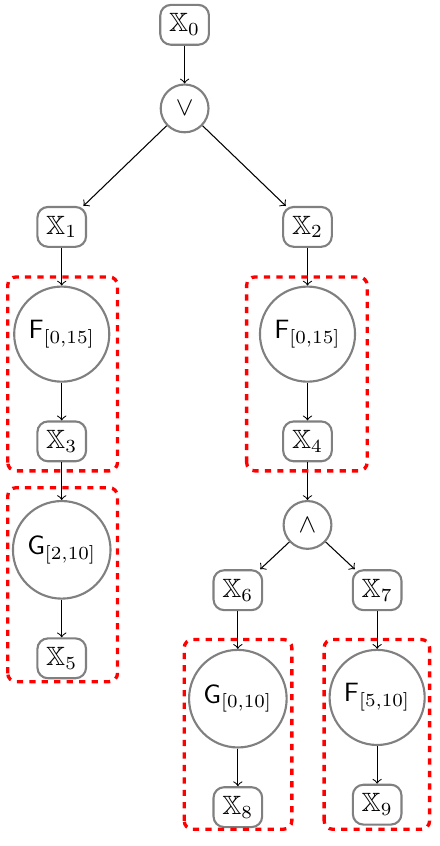}
	\caption{\footnotesize The sTLT $\mathcal{T}_{\hat\varphi}$ for the nested STL formula $\varphi=\mathsf{F}_{[0, 15]}(\mathsf{G}_{[2, 10]}\mu_1 \vee \mu_2  \mathsf{U}_{[5, 10]}\mu_3)$. }
 \label{Fig:example}
\end{figure}

\subsection{sTLT semantics}

Before define the \emph{sTLT semantics}, i.e., the satisfaction relation between a trajectory $\bm{x}$ and an sTLT $\mathcal{T}$, the definitions of complete path, temporal fragment, and time interval coding for an sTLT $\mathcal{T}$ are needed.

\begin{definition}[Complete path]\label{def:completepath}
 A \emph{complete path} $\bm{p}$ of an sTLT is a path that starts from the root node and ends at a leaf node. It can be encoded in the form of $\bm{p}=\mathbb{X}_0\Theta_1\mathbb{X}_1\Theta_2\ldots \Theta_{N_f} \mathbb{X}_{N_f}$, where $N_f$ is the number of operator nodes contained in the complete path, $\mathbb{X}_i, i\in \{0,1,\ldots,N_f\}$ represent set nodes, and $\Theta_{j}, \forall j\in \{1,\ldots, N_f\}$ represent operator nodes. 
\end{definition}

\begin{definition}[Temporal fragment]\label{def:temporalfragment}
A \emph{temporal fragment}  of a complete path is a fragment that starts from one temporal operator node, i.e., the node $\mathsf{U}_{[a, b]}, \mathsf{F}_{[a, b]}$ or $\mathsf{G}_{[a, b]}$, and ends at its child set node.
\end{definition}

\begin{definition}[Time interval coding]\label{Def:timecoding}
	A \emph{time interval coding} of a complete path involves assigning a time interval $[\underline{t}_{i}, \bar{t}_{i}],  0\le \underline{t}_{i}\le \bar{t}_{i}$ to  each set node $\mathbb{X}_i$ in the complete path.
\end{definition}

Given a time instant $\hat t$ and two time intervals $[a_1, b_1], [a_2, b_2]$, define
\begin{equation*}
	\begin{aligned}
		&\hat t+[a_1, b_1]:=[\hat t+a_1, \hat t+b_1],\\
		&[a_1, b_1]+[a_2, b_2]:=[a_1+a_2, b_1+b_2].
	\end{aligned}
\end{equation*}
Now, we further define the satisfaction relation between a trajectory $\bm{x}$ and a complete path of the sTLT.

\begin{definition}\label{Def:PathSaf}
	Consider a trajectory $\bm{x}$ and a complete path $\bm{p}=\mathbb{X}_0\Theta_1\mathbb{X}_1\Theta_2\ldots \Theta_{N_f} \mathbb{X}_{N_f}$. We say \emph{$\bm{x}$ satisfies $\bm{p}$ from time $t$}, denoted by $(\bm{x}, t)\cong \bm{p}$, if there exists a time interval coding for $\bm{p}$ such that $\underline{t}_{0}=\bar{t}_0=t$ and, for $i=1,2, \ldots, N_f $,
	\begin{itemize}
		\item[i)] if $\Theta_i\in \{\wedge, \vee\}$, then $[\underline{t}_{i}, \bar{t}_{i}]=[\underline{t}_{{i-1}}, \bar{t}_{{i-1}}]$;
		\item[ii)] if $\Theta_i\in \{\mathsf{U}_{[a, b]}, \mathsf{F}_{[a, b]}\}$, then $\exists t'\in [a, b]$ s.t. $[\underline{t}_{{i}}, \bar{t}_{{i}}]=t'+[\underline{t}_{{i-1}}, \bar{t}_{{i-1}}]$;
		\item[iii)] if $\Theta_i=\mathsf{G}_{[a, b]}$, then $[\underline{t}_{{i}}, \bar{t}_{{i}}]=[a, b]+[\underline{t}_{{i-1}}, \bar{t}_{{i-1}}]$;
	\end{itemize}
	and, for $i=0,1, \ldots, N_f $,
	\begin{itemize}
		\item[iv)] $\bm{x}(t)\in \mathbb{X}_{i}, \forall t\in [\underline{t}_{i}, \bar{t}_{i}]$.
	\end{itemize}
\end{definition}

With Definition \ref{Def:PathSaf}, the sTLT semantics, i.e., the satisfaction relation between a trajectory $\bm{x}$ and an sTLT, can be defined as follows.

\begin{definition}[sTLT semantics]\label{Def:TreeSaf}
	Consider a trajectory $\bm{x}$ and an sTLT $\mathcal{T}$. We say \emph{$\bm{x}$ satisfies $\mathcal{T}$ from time $t$}, denoted by $(\bm{x}, t) \cong \mathcal{T}$, if the output of Algorithm  \ref{alg:tree satisfaction} is ${\rm true}$.
\end{definition}

\begin{algorithm}
	\caption{\textit{sTLT Satisfaction}} \label{alg:tree satisfaction}
	\begin{algorithmic}[1]
\Require a trajectory $\bm{x}$ and an sTLT $\mathcal{T}$.
\Ensure ${\rm true}$ or ${\rm false}$.
\State $\mathcal{T}^c \leftarrow$ remove all temporal fragments in $\mathcal{T}$,
\For {each complete path $\bm{p}$ of $\mathcal{T}$,}
\If{$(\bm{x}, 0)\cong \bm{p}$}
\State \parbox[t]{\dimexpr\linewidth-\algorithmicindent}{set the corresponding leaf node of $\bm{p}$ \\in $\mathcal{T}^c$ to ${\rm true}$,}
\Else
\State \parbox[t]{\dimexpr\linewidth-\algorithmicindent}{set the corresponding leaf node of $\bm{p}$\\ in $\mathcal{T}^c$ to ${\rm false}$,}
\EndIf
\EndFor
\State set all the non-leaf set nodes of $\mathcal{T}^c$ to ${\rm false}$,
\State $\mathcal{T}^c\leftarrow \textit{Backtracking}(\mathcal{T}^c)$,
\State return the root node of $\mathcal{T}^c$.
\end{algorithmic}
\end{algorithm}

\begin{figure}[t]
\centering	\includegraphics[width=0.2\textwidth]{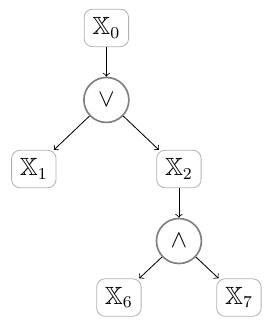}
	\caption{\footnotesize The compressed tree $\mathcal{T}^c$ for the sTLT $\mathcal{T}_{\hat\varphi}$ plotted in Fig. \ref{Fig:example}. }
 \label{Fig:example_compressed}.
\end{figure}

Algorithm \ref{alg:tree satisfaction} takes as inputs  a trajectory $\bm{x}$ and an sTLT $\mathcal{T}$. The output is ${\rm true}$ or ${\rm false}$. It works as follows. Given the sTLT $\mathcal{T}$, we first remove all its temporal fragments (line 1). When removing a temporal fragment, we reconnect the parent node of the corresponding temporal operator node and the child of the corresponding set node. In this way the resulting compressed tree $\mathcal{T}^c$ contains only Boolean operator nodes and set nodes. For the sTLT $\mathcal{T}_{\hat\varphi}$ shown in Fig. \ref{Fig:example}, the compressed tree $\mathcal{T}^c$ is depicted in Fig. \ref{Fig:example_compressed}. Then for each complete path $\bm{p}$ of $\mathcal{T}$, if $(\bm{x}, 0)\cong \bm{p}$, one sets the corresponding leaf node of $\bm{p}$ in $\mathcal{T}^c$ (note that $\mathcal{T}^c$ and $\mathcal{T}$ have the same number of leaf nodes) to ${\rm true}$. Otherwise, one sets the corresponding leaf node of $\bm{p}$ in $\mathcal{T}^c$ to ${\rm false}$ (lines 2-8). After that, we set all the non-leaf set nodes of $\mathcal{T}^c$ to ${\rm false}$ (line 9) and the resulting tree becomes a Boolean tree (a tree with Boolean operator and Boolean variable nodes). Finally, we backtrack the Boolean tree $\mathcal{T}^c$ using Algorithm \textit{Backtracking}, given in Algorithm  \ref{alg:Backtracking}, and return the root node (lines 10-11).

\begin{algorithm} 
	\caption{\textit{Backtracking}} \label{alg:Backtracking}
	\begin{algorithmic}[1]
		\Require a tree $\mathcal{T}^c$ with  Boolean operator and Boolean variable nodes.
		\Ensure the updated $\mathcal{T}^c$.
		\For {each operator node $\Theta$ of $\mathcal{T}^c$ through a bottom-up traversal,}
		\If {$\Theta=\wedge$,}
		\State $\text{PA}(\Theta)\leftarrow \text{PA}(\Theta)\vee (\text{CH}_1(\Theta)\wedge \text{CH}_2(\Theta))$,
		\Else
		\State $\text{PA}(\Theta)\leftarrow \text{PA}(\Theta)\vee (\text{CH}_1(\Theta)\vee \text{CH}_2(\Theta))$,
		\EndIf
		\EndFor
	\end{algorithmic}
\end{algorithm}

In Algorithm \textit{Backtracking}, $\text{PA}(\Theta)$ and $\text{CH}_1(\Theta), \text{CH}_2(\Theta)$ represent the parent and two children nodes of the Boolean operator node $\Theta\in \{\wedge, \vee\}$, respectively.

\begin{remark}\label{Rem1}
In \cite{gao2021}, the TLT is introduced for the model checking and control synthesis of discrete-time systems under LTL tasks. In this work, the sTLT is designed to guide the design of CBFs for continuous-time dynamical systems under nested STL formulas. The {much more complex} time constraints encoded in STL formulas have naturally led to different construction procedures and semantics of sTLT when compared to TLT in \cite{gao2021}, which we highlight as follows. First, the construction of sTLT is largely different from TLT. To incorporate the time constraints encoded in an STL formula, our construction of the sTLT relies on the finite time reachability analysis, i.e., the maximal and minimal reachablility operators $\mathcal{R}^M$ and $\mathcal{R}^m$ given respectively in Definitions \ref{Def:maxreachset} and \ref{Def:minreachset}. In \cite{gao2021}, the TLT construction relies on the infinite time controlled invariant set and robust controlled invariant set. Second, the sTLT semantics is largely different from TLT semantics. In order to monitor the
time constraint satisfaction in an STL formulas, we introduce in this work the definition of \emph{time interval coding} (cf. Definition \ref{Def:timecoding}) for a complete path of an sTLT. On one hand, we show in Definitions \ref{Def:PathSaf} and \ref{Def:TreeSaf} that the satisfaction of an sTLT can be characterized by the existence of a well-defined time interval coding. On the other hand, it will become clear later that the time interval coding is also vital in control synthesis. In \cite{gao2021}, the TLT semantics is much simpler as it only requires an assignment of ascending integers as time indices for each complete path of TLT.
\end{remark}

To better understand the sTLT semantics, i.e., Definition \ref{Def:TreeSaf}, the following definitions are needed.

\begin{definition} \label{def:top level}
	 We say \emph{an sTLT $\mathcal{T}$ contains $\vee$ operator nodes only at its top layers} if for every complete path $\bm{p}=\mathbb{X}_0\Theta_1\mathbb{X}_1\Theta_2\ldots \Theta_{N_f} \mathbb{X}_{N_f}$ of $\mathcal{T}$  that contains $\vee$ operator nodes, there exists a $1\le k\le N_f$ such that 
	 \begin{equation}\label{theta}
	     \Theta_{j} \in \begin{cases}
	 \{\vee\} & j\in \{1,\ldots, k\}, \\
		 \{ \wedge, F_{[a,b]}, G_{[a,b]}\}, & j\in \{k+1,\ldots, N_f\}.
		\end{cases}
	 \end{equation}
\end{definition}

\begin{remark}\label{rem:toplayer}
 For any nested STL formula $\varphi$, the operator nodes $\vee$, if it exists, only appears in the top layers of the constructed sTLT $\mathcal{T}_{\hat{\varphi}}$. This can be seen from the fact that $\hat\varphi$ is in the form of $\hat{\varphi} = \varphi_1 \vee \varphi_2 \vee \ldots \vee \varphi_N$, and $\varphi_i, i= 1, 2, \ldots, N,$ contain no $\vee$ operator as discussed in Step 1.
\end{remark}

\begin{definition} \label{def:branch}
Let $\bm{p}_l =\mathbb{X}_0 \Theta_1^{l} \mathbb{X}_1^{l} \Theta_2^{l}\ldots \Theta_{N_f}^{l} \mathbb{X}_{N_f}^{l}$ and $\bm{p}_f = \mathbb{X}_0\Theta_1^f \mathbb{X}_1^f \Theta_2^f \ldots \Theta_{N'_f}^f \mathbb{X}_{N'_f}^f$ be two complete paths of an sTLT $\mathcal{T}$. Denote by $k_l=\arg\max_{k}\{\Theta_k^l=\vee\}$ and $k_f=\arg\max_{k}\{\Theta_k^f=\vee\}$. We say \emph{$\bm{p}_l$ and $\bm{p}_f$ belong to the same branch of $\mathcal{T}$} if $k_l=k_f$ and $\mathbb{X}_j^{l} = \mathbb{X}_j^{f}, \Theta_j^{l} = \Theta_j^{f}, \forall j=1, \ldots k_l$. 
\end{definition}

\begin{remark} \label{rem:treeSatisfy}
   Definition \ref{Def:TreeSaf} can be interpreted as follows: 
   \begin{itemize}
       \item[1)] Consider the case where the sTLT $\mathcal{T}$ contains no $\vee$ operator. Then Definition \ref{Def:TreeSaf} dictates that $(\bm x,t) \cong  \mathcal{T}$ if and only if $(\bm x,t)$ satisfies every complete path of $\mathcal{T}$. 
   \item[2)] 
        Consider the case where the sTLT $\mathcal{T}$ contains $\vee$ operator nodes only at its top layers.  Then Definition \ref{Def:TreeSaf} dictates that $(\bm x,t) \cong  \mathcal{T}$ if and only if $(\bm x,t)$ satisfies at least one branch of complete paths.
   
   \end{itemize}

\end{remark}

\begin{example*}[continued] \label{ex:complete paths}
Let us continue with Example \ref{example1}. According to Definition \ref{def:completepath},
the sTLT $\mathcal{T}_{\hat\varphi}$ (see Fig. \ref{Fig:example}) has in total 3 complete paths, i.e., 
\begin{align*}
    \bm{p}_1&=\mathbb{X}_0\vee \mathbb{X}_1 \mathsf{F}_{[0, 15]}\mathbb{X}_3\mathsf{G}_{[2, 10]}\mathbb{X}_5, \\
    \bm{p}_2&=\mathbb{X}_0\vee \mathbb{X}_2 \mathsf{F}_{[0, 15]}\mathbb{X}_4\wedge\mathbb{X}_6 \mathsf{G}_{[0, 10]}\mathbb{X}_8, \\
    \bm{p}_3&=\mathbb{X}_0\vee \mathbb{X}_2 \mathsf{F}_{[0, 15]}\mathbb{X}_4\wedge\mathbb{X}_7\mathsf{F}_{[5, 10]}\mathbb{X}_9,
\end{align*}
 and 5 temporal fragments, which are encircled by the red dashed rectangles in Fig. \ref{Fig:example}. 
 
 The sTLT $\mathcal{T}_{\hat\varphi}$ contains $\vee$ operator nodes only at its top layers since one has $k_1=k_2=k_3=1$ according to Definition \ref{def:top level}. On one hand, one observes that $\Theta_1^2=\Theta_1^3=\vee$ and $\mathbb{X}_1^2=\mathbb{X}_1^3=\mathbb{X}_2$. Therefore, $\bm{p}_2$ and $\bm{p}_3$ belong to the same branch. On the other hand, since $\mathbb{X}_1^1=\mathbb{X}_1 \neq  \mathbb{X}_2 = \mathbb{X}_1^2 =\mathbb{X}_1^3 $, neither $\bm{p}_1$ and $\bm{p}_2$ nor $\bm{p}_1$ and $\bm{p}_3$  belong to the same branch. A trajectory $(\bm{x}, t)\cong\mathcal{T}_{\hat\varphi}$ if and only if either of the following 2 conditions is satisfied: (1) $(\bm{x}, t)\cong\bm{p}_1$, (2) $(\bm{x}, t)\cong\bm{p}_2$ and $(\bm{x}, t)\cong\bm{p}_3$. 
\end{example*}

\subsection{Relations between $\mathcal{T}_{\hat{\varphi}}$ and $\hat\varphi$ ($\varphi$)}

In this section, we derive the relations between an STL formula $\hat\varphi$ ($\varphi$) and its constructed sTLT $\mathcal{T}_{\hat{\varphi}}$.
First, we show the result for STL formulas in desired form, i.e., $\hat{\varphi}$.

\begin{theorem}\label{thm1}
Consider the system (\ref{x0}) and an STL task $\hat\varphi$ in desired form as per Definition \ref{def:desired}. The sTLT $\mathcal{T}_{\hat\varphi}$ is equivalent to $\hat\varphi$ in the sense that
\begin{equation}\label{desired_equa}
    (\bm{x}, t)\cong \mathcal{T}_{\hat\varphi}\Leftrightarrow (\bm{x}, t)\models \hat\varphi.
\end{equation}
\end{theorem}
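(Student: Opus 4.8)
\textbf{Proof plan for Theorem \ref{thm1}.}

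The plan is to proceed by structural induction on the STL formula $\hat\varphi$ in desired form, following exactly the recursive construction of $\mathcal{T}_{\hat\varphi}$ in Steps 2--3. The base case is when $\hat\varphi$ is a predicate $\mu$, its negation $\neg\mu$, or $\top/\bot$: here $\mathcal{T}_{\hat\varphi}$ is a single set node $\mathbb{X}_{\hat\varphi}=\mathbb{S}_{\hat\varphi}$, a complete path is just $\mathbb{X}_0$ with $\underline t_0=\bar t_0=t$, and condition iv) of Definition \ref{Def:PathSaf} reduces to $\bm x(t)\in \mathbb{S}_{\mu}$, i.e. $g_\mu(\bm x(t))\ge 0$, which is precisely $(\bm x,t)\models\mu$ by Definition \ref{STLsemantics}; the other atomic cases are analogous. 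For the inductive step I would assume \eqref{desired_equa} holds for $\varphi_1$ (and $\varphi_2$) and establish it for each of the four combinators $\varphi_1\wedge\varphi_2$, $\varphi_1\vee\varphi_2$, $\mathsf{F}_{[a,b]}\varphi_1$, $\mathsf{G}_{[a,b]}\varphi_1$, which by Definition \ref{def:desired} (no $\mathsf U$; no $\vee$ under $\mathsf G$) are the only constructors that can appear.

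The Boolean cases are the easy ones: the complete paths of $\mathcal{T}_{\varphi_1\wedge\varphi_2}$ (resp. $\mathcal{T}_{\varphi_1\vee\varphi_2}$) are exactly the complete paths of $\mathcal{T}_{\varphi_1}$ and of $\mathcal{T}_{\varphi_2}$ with the root $\mathbb{X}_0$ prepended through a $\wedge$ (resp. $\vee$) node, and since $\Theta_1\in\{\wedge,\vee\}$ forces $[\underline t_1,\bar t_1]=[\underline t_0,\bar t_0]=[t,t]$, the root-node membership condition iv) at $\mathbb{X}_0$ is automatically implied by the child condition. Then Remark \ref{rem:treeSatisfy} (the $\cong$-semantics via Algorithm \ref{alg:tree satisfaction}/\textit{Backtracking}) says $(\bm x,t)\cong\mathcal{T}_{\varphi_1\wedge\varphi_2}$ iff $(\bm x,t)$ satisfies every complete path iff $(\bm x,t)\cong\mathcal{T}_{\varphi_1}$ and $(\bm x,t)\cong\mathcal{T}_{\varphi_2}$, matching $(\bm x,t)\models\varphi_1\wedge\varphi_2$; the $\vee$ case is dual, using that the $\vee$-node appears at the top so the two subtrees lie in different branches. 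For $\mathsf{F}_{[a,b]}\varphi_1$: by Definition \ref{Def:PathSaf}.ii) a time-interval coding exists iff there is $t'\in[a,b]$ with $[\underline t_1,\bar t_1]=t'+[\underline t_0,\bar t_0]=[t+t',t+t']$ together with a valid coding of the $\varphi_1$-subpath started at $t+t'$, plus $\bm x(\tau)\in\mathbb{X}_{\mathsf{F}_{[a,b]}\varphi_1}=\mathcal{R}^M(\mathbb{X}_{\varphi_1},[a,b])$ at $\tau=t$; invoking the inductive hypothesis this is $\exists t'\in[a,b]: (\bm x,t+t')\models\varphi_1$, i.e. $(\bm x,t)\models\mathsf{F}_{[a,b]}\varphi_1$. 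The reachability-operator characterisation of $\mathbb{X}_{\mathsf{F}_{[a,b]}\varphi_1}$ (Lemma \ref{lem2}.i) and the recursive construction) is what guarantees the root-node condition is redundant given the existence of the coding, i.e. it does not over-constrain. The $\mathsf{G}_{[a,b]}\varphi_1$ case is symmetric: Definition \ref{Def:PathSaf}.iii) gives $[\underline t_1,\bar t_1]=[a,b]+[t,t]=[t+a,t+b]$, so condition iv) demands $\bm x(\tau)\in\mathbb{X}_{\varphi_1}$ for all $\tau$ in a window starting at each point of $[t+a,t+b]$ — but $\varphi_1$ in desired form under $\mathsf G$ has no $\vee$, hence $\mathcal{T}_{\varphi_1}$ has a single branch, and "$(\bm x,\tau)$ satisfies that single branch for all $\tau\in[t+a,t+b]$" unwinds, via the inductive hypothesis, to $\forall\tau\in[t+a,t+b]:(\bm x,\tau)\models\varphi_1$, which is $(\bm x,t)\models\mathsf{G}_{[a,b]}\varphi_1$; here $\mathbb{X}_{\mathsf{G}_{[a,b]}\varphi_1}=\overline{\mathcal{R}^m(\overline{\mathbb{X}_{\varphi_1}},[a,b])}$ and Lemma \ref{lem2}.ii) are used to see the root condition is implied.

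The main obstacle I anticipate is the $\mathsf G$ case, specifically the interaction between the "for all $t\in[\underline t_i,\bar t_i]$" quantifier of condition iv), the multiplicity of complete paths sharing the node below $\mathsf G$, and the need for a \emph{single} time-interval coding that works along the whole path. One must argue carefully that a uniform choice of coding (i.e. choosing the witness $t'$ for each nested $\mathsf F$ as a measurable/consistent function of the outer time) can be made, and that the "no $\vee$ below $\mathsf G$" restriction is exactly what makes this possible — otherwise $(\bm x,\tau)$ could satisfy different disjuncts at different $\tau$ and the tree-satisfaction (which picks one branch globally) would be strictly stronger than the STL semantics. I would isolate this as a lemma: for a subtree with no $\vee$, $(\bm x,\tau)\cong\mathcal{T}_{\varphi_1}$ for all $\tau$ in an interval $I$ iff there is a coding valid simultaneously for all $\tau\in I$, iff $(\bm x,\tau)\models\varphi_1$ for all $\tau\in I$. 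The second subtlety, worth a remark, is checking that the root set-node membership conditions introduced at each constructor step are genuinely \emph{implied} by the existence of a valid coding of the rest of the path — this is where Lemma \ref{lem2} and the monotonicity of $\mathcal{R}^M,\mathcal{R}^m$ enter — so that adding them changes neither direction of the equivalence.
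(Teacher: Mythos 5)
Your plan is correct and follows essentially the same route as the paper: structural induction over the constructors admitted by the desired form ($\wedge$, $\vee$, $\mathsf{F}_{[a,b]}$, $\mathsf{G}_{[a,b]}$), with the root set node of each temporal constructor recovered from the time-interval coding via the reachability characterizations $\mathcal{R}^M$ and $\overline{\mathcal{R}^m(\cdot)}$. The only difference is one of emphasis: you flag the $\mathsf{G}$ case (uniformity of the coding over $[t+a,t+b]$ and the role of the no-$\vee$-under-$\mathsf{G}$ restriction) as a point deserving a separate lemma, which the paper's proof treats more tersely but along the same lines.
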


\begin{proof}
For $\top$, predicate $\mu$, its negation $\neg \mu$, $\mu_1\wedge \mu_2$, and $\mu_1\vee \mu_2$, it is trivial to verify that $(\bm{x}, t)\cong \mathcal{T}_{\hat\varphi}\Leftrightarrow (\bm{x}, t)\models \hat\varphi.$

Next, we follow the induction rule to show that if $(\bm{x}, t)\cong \mathcal{T}_{\hat\varphi_1}\Leftrightarrow (\bm{x}, t)\models \hat\varphi_1$ and $(\bm{x}, t)\cong \mathcal{T}_{\hat\varphi_2}\Leftrightarrow (\bm{x}, t)\models \hat\varphi_2$, then the constructed sTLT $\mathcal{T}_{\hat\varphi}$ satisfies $(\bm{x}, t)\cong \mathcal{T}_{\hat\varphi}\Leftrightarrow (\bm{x}, t)\models \hat\varphi$ for a) $\hat\varphi=\hat\varphi_1 \wedge \hat\varphi_2, $ b) $  \hat\varphi_1 \vee \hat\varphi_2, $ c)$\mathsf{F}_{[a, b]} \hat\varphi_1, $ and d) $\mathsf{G}_{[a, b]} \hat\varphi_1$.

Case a): $\hat\varphi=\hat\varphi_1 \wedge \hat\varphi_2$. Assume that a trajectory $(\bm{x}, t)\cong \mathcal{T}_{\hat\varphi}$.  According to Definition \ref{Def:TreeSaf}, we have  $(\bm{x}, t)\cong \mathcal{T}_{\hat\varphi_1}$ and $(\bm{x}, t)\cong \mathcal{T}_{\hat\varphi_2}$. Under the assumption that $(\bm{x}, t)\cong \mathcal{T}_{\hat\varphi_1}\Leftrightarrow (\bm{x}, t)\models \hat\varphi_1$ and $(\bm{x}, t)\cong \mathcal{T}_{\hat\varphi_2}\Leftrightarrow (\bm{x}, t)\models \hat\varphi_2$, one can get that $(\bm{x}, t)\models \hat\varphi_1$ and $(\bm{x}, t)\models \hat\varphi_2$, which implies $(\bm{x}, t)\models \hat\varphi$. Thus, $(\bm{x}, t)\cong \mathcal{T}_{\hat\varphi}\Rightarrow (\bm{x}, t)\models \hat\varphi$. The proof of the other direction
is similar and hence omitted. 

Case b): $\hat\varphi=\hat\varphi_1 \vee \hat\varphi_2$. The proof is similar to Case a) and hence omitted.

Case c): $\hat\varphi=\mathsf{F}_{[a, b]}\hat\varphi_1$. Assume that a trajectory $(\bm{x}, t)\cong\mathcal{T}_{\mathsf{F}_{[a, b]}\hat\varphi_1}$. As depicted in Fig.1(c), we know that each complete path of $\mathcal{T}_{\hat\varphi}$ can be written in the form of $\bm{p} = \mathbb{X}_0 \Theta_1 \bm{p}^{\prime}$, where $ \Theta_1=F_{[a,b]}$ and $\bm{p}^{\prime}$ is a complete path of $\mathcal{T}_{\hat{\varphi}_1}$. According to Definitions \ref{Def:PathSaf} and \ref{Def:TreeSaf}, we have $\exists t'\in [t+a, t+b], \bm{x}(t')\in \mathbb{S}_{\hat\varphi_1}$ and $(\bm{x}, t')\cong \mathcal{T}_{\hat\varphi_1}$.  Under the assumption that $(\bm{x}, t)\cong \mathcal{T}_{\hat\varphi_1}\Leftrightarrow (\bm{x}, t)\models \hat\varphi_1$, one can get that $\exists t'\in [a, b], (\bm{x}, t')\models \hat\varphi_1$, which implies $(\bm{x}, t)\models \mathsf{F}_{[a, b]}\hat\varphi_1$ by Definition \ref{STLsemantics}. Thus, $(\bm{x}, t)\cong\mathcal{T}_{\mathsf{F}_{[a, b]}\hat\varphi_1}\Rightarrow (\bm{x}, t)\models \mathsf{F}_{[a, b]}\hat\varphi_1$. Assume now that $(\bm{x}, t)\models \mathsf{F}_{[a, b]}\hat\varphi_1$. Then one has from Definition \ref{STLsemantics} that $\exists t'\in [t+a, t+b], \bm{x}(t')\in \mathbb{S}_{\hat\varphi_1}$, which implies $\bm{x}(t)\in \mathcal{R}^{M}(\mathbb{S}_{\hat\varphi_1}, [a, b])=\mathbb{X}_0$. According to Definitions \ref{Def:PathSaf} and \ref{Def:TreeSaf}, it means that $(\bm{x}, t)\cong \mathcal{T}_{\hat\varphi}$. Therefore, $(\bm{x}, t)\models \mathsf{F}_{[a, b]}\hat\varphi_1\Rightarrow (\bm{x}, t)\cong\mathcal{T}_{\mathsf{F}_{[a, b]}\hat\varphi_1}.$

Case d): $\hat\varphi=\mathsf{G}_{[a, b]}\hat\varphi_1$. Assume that a trajectory $(\bm{x}, t)\cong\mathcal{T}_{\mathsf{G}_{[a, b]}\hat\varphi_1}$. As depicted in Fig.1(d), we know that each complete path of $\mathcal{T}_{\hat\varphi}$ can be written in the form of $\bm{p} = \mathbb{X}_0 \Theta_1 \bm{p}^{\prime}$, where $ \Theta_1=\mathsf{G}_{[a,b]}$ and $\bm{p}^{\prime}$ is a complete path of $\mathcal{T}_{\hat{\varphi}_1}$.  According to Definitions \ref{Def:PathSaf}  and \ref{Def:TreeSaf}, we have 
$(\bm{x}, t')\cong \mathcal{T}_{\hat\varphi_1}, \forall t'\in [t+a, t+b]$. Under the assumption that $(\bm{x}, t)\cong \mathcal{T}_{\hat\varphi_1}\Leftrightarrow (\bm{x}, t)\models \hat\varphi_1$, one can get that $(\bm{x}, t')\models \hat\varphi_1, \forall t'\in [t+a, t+b]$, which implies $(\bm{x}, t)\models \mathsf{G}_{[a, b]}\hat\varphi_1$ by Definition \ref{STLsemantics}. Thus, $(\bm{x}, t)\cong\mathcal{T}_{\mathsf{G}_{[a, b]}\hat\varphi_1}\Rightarrow (\bm{x}, t)\models \mathsf{G}_{[a, b]}\hat\varphi_1$. Assume now that $(\bm{x}, t)\models \mathsf{G}_{[a, b]}\hat\varphi_1$. Then one has that $\bm{x}(t')\in \mathbb{S}_{\hat\varphi_1}, \forall t'\in [t+a, t+b]$. Since $\hat\varphi_1$ contains no ``disjunction" operator according to Definition 
\ref{def:desired}, one can further get that  $\bm{x}(t)\in \overline{\mathcal{R}^m(\overline{\mathbb{S}_{\hat\varphi_1}}, [a, b])}=\mathbb{X}_0$. According to Definitions \ref{Def:PathSaf} and \ref{Def:TreeSaf}, it means that $(\bm{x}, t)\cong \mathcal{T}_{\hat\varphi}$. Therefore, $(\bm{x}, t)\models \mathsf{G}_{[a, b]}\hat\varphi_1\Rightarrow (\bm{x}, t)\cong\mathcal{T}_{\mathsf{G}_{[a, b]}\hat\varphi_1}.$
\end{proof}

For general STL tasks, we have the following result.

\begin{theorem}\label{thm2}
	Consider the system (\ref{x0}) and an STL task $\varphi$ in Definition \ref{Def:PNF}. The sTLT $\mathcal{T}_{\hat\varphi}$ is an under-approximation of $\varphi$ in the sense that
	$$(\bm{x}, t)\cong \mathcal{T}_{\hat\varphi}\Rightarrow (\bm{x}, t)\models \varphi.$$
\end{theorem}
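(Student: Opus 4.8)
\textbf{Proof plan for Theorem \ref{thm2}.}

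The plan is to reduce the general case to Theorem \ref{thm1} by carefully analyzing what Step 1 of the construction does to the formula. The key observation is that $\mathcal{T}_{\hat\varphi}$ is, by definition, the sTLT built from $\hat\varphi$, and $\hat\varphi$ is in desired form; hence Theorem \ref{thm1} already gives $(\bm{x}, t)\cong \mathcal{T}_{\hat\varphi}\Leftrightarrow (\bm{x}, t)\models \hat\varphi$. So it suffices to prove the purely STL-level implication $(\bm{x}, t)\models \hat\varphi \Rightarrow (\bm{x}, t)\models \varphi$, i.e., that rewriting $\varphi$ into desired form only strengthens the specification. This isolates the real content: the two rewriting rules of Step 1 are sound in the direction ``desired form implies original.''

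First I would handle the two rewriting rules separately, each as an STL semantic lemma proved directly from Definition \ref{STLsemantics}. Rule (i): $\mathsf{G}_{[0,b]}\varphi_1 \wedge \mathsf{F}_{[a,b]}\varphi_2 \Rightarrow \varphi_1\mathsf{U}_{[a,b]}\varphi_2$. Given a signal satisfying the left side at time $t$, there is $t'\in[t+a,t+b]$ with $(\bm{x},t')\models\varphi_2$, and $(\bm{x},t'')\models\varphi_1$ for all $t''\in[t,t+b]\supseteq[t,t']$; this is exactly the until semantics. (Note this implication is in general strict, since the until operator also permits $\varphi_1$ to fail after $t'$ — which is why we only get under-approximation, not equivalence.) Rule (ii): $\Theta_{[a,b]}\varphi_1 \vee \Theta_{[a,b]}\varphi_2 \Rightarrow \Theta_{[a,b]}(\varphi_1\vee\varphi_2)$ for $\Theta\in\{\mathsf{F},\mathsf{G}\}$. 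For $\mathsf{F}$ this is actually an equivalence (existential quantifier distributes over disjunction); for $\mathsf{G}$ the implication holds because a witness making $\varphi_1$ hold throughout (resp. $\varphi_2$) certainly makes $\varphi_1\vee\varphi_2$ hold throughout, and again it is strict since $\mathsf{G}(\varphi_1\vee\varphi_2)$ allows the disjunct to alternate over time.

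Next I would promote these local rewriting facts to the whole formula by structural induction on the parse tree of $\varphi$, using the standard monotonicity/congruence property of PNF STL: if $\psi \Rightarrow \psi'$ (as a semantic implication valid at every time and every signal), then $C[\psi] \Rightarrow C[\psi']$ for any PNF context $C[\cdot]$ built from $\wedge,\vee,\mathsf{F}_{[a,b]},\mathsf{G}_{[a,b]},\mathsf{U}_{[a,b]}$ — this is immediate from Definition \ref{STLsemantics} since all these operators are monotone in their arguments (PNF precisely ensures no negation wraps a subformula). Applying each Step 1 rewrite at its subformula position and chaining these congruences along the (terminating) rewriting process yields $(\bm{x},t)\models\hat\varphi \Rightarrow (\bm{x},t)\models\varphi$. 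Combining with Theorem \ref{thm1} gives $(\bm{x}, t)\cong \mathcal{T}_{\hat\varphi}\Rightarrow(\bm{x},t)\models\hat\varphi\Rightarrow(\bm{x},t)\models\varphi$, as claimed.

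The main obstacle is mostly bookkeeping rather than depth: one must argue that the Step 1 procedure terminates (each application of rule (ii) pushes temporal operators strictly inward / reduces the number of disjunctions under a temporal operator, and rule (i) is applied once per until operator) and that the congruence lemma applies at each intermediate formula — in particular that all intermediate formulas remain in PNF so that monotonicity is available. A secondary subtlety worth stating explicitly is why equivalence degrades to one-way implication here but not in Theorem \ref{thm1}: it is precisely rules (i) and the $\mathsf{G}$-case of (ii) that are strict, so $\hat\varphi$ is genuinely a sub-specification of $\varphi$ unless $\varphi$ is already (equivalent to something) in desired form.
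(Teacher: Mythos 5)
Your proposal is correct and takes essentially the same route as the paper: reduce to Theorem \ref{thm1} via the equivalence $(\bm{x},t)\cong\mathcal{T}_{\hat\varphi}\Leftrightarrow(\bm{x},t)\models\hat\varphi$, then establish $(\bm{x},t)\models\hat\varphi\Rightarrow(\bm{x},t)\models\varphi$ from the same three local semantic implications for the ``until'' and ``disjunction'' rewrites of Step 1. The only difference is that you spell out the monotonicity/congruence argument needed to lift these local rewrites to arbitrary PNF contexts (and the termination of the rewriting), which the paper leaves implicit.
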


\begin{proof}
The proof can be completed by showing 1) $(\bm{x}, t)\cong \mathcal{T}_{\hat\varphi}\Leftrightarrow (\bm{x}, t)\models \hat\varphi$ and 2) $(\bm{x}, t)\models \hat\varphi\Rightarrow (\bm{x}, t)\models \varphi.$ The proof for condition 1) is given in Theorem 1. Condition 2) is straightforward since $(\bm{x}, t)\models \mathsf{G}_{[0, b]}\varphi_1 \wedge \mathsf{F}_{[a, b]}\varphi_2 \Rightarrow (\bm{x}, t)\models \varphi_1\mathsf{U}_{[a, b]}\varphi_2$, $(\bm{x}, t)\models \mathsf{G}_{[a, b]}\varphi_1\vee \mathsf{G}_{[a, b]}\varphi_2 \Rightarrow (\bm{x}, t)\models \mathsf{G}_{[a, b]}(\varphi_1\vee\varphi_2)$, and $(\bm{x}, t)\models \mathsf{F}_{[a, b]}\varphi_1\vee \mathsf{F}_{[a, b]}\varphi_2 \Leftrightarrow (\bm{x}, t)\models \mathsf{F}_{[a, b]}(\varphi_1\vee\varphi_2)$, one has from the construction of the sTLT, Step 1 that $(\bm{x}, t)\models \hat\varphi\Rightarrow (\bm{x}, t)\models\varphi$.
The conclusion follows.
\end{proof}

\begin{remark}
   It becomes apparent from Theorems \ref{thm1} and \ref{thm2} that the under-approximation gap between general STL formula $\varphi$ in Definition \ref{Def:PNF} and sTLT is a result of \emph{Step 1} of constructing the sTLT, i.e., rewrite the STL formula $\varphi$ into the desired form $\hat\varphi$. In this step, we conduct two operations. First, we rewrite ``until" operator $\varphi=\varphi_1\mathsf{U}_{[a, b]} \varphi_2$ as $\hat\varphi=\mathsf{G}_{[0, b]}\varphi_1 \wedge \mathsf{F}_{[a, b]}\varphi_2$. This operation introduces conservatism because $\hat\varphi$ requires the pre-argument $\varphi_1$ of the ``until" operator to be satisfied for the time interval $[0, b]$ while $\varphi$ requires that $\exists t'\in [a, b]$ such that $\varphi_1$ is satisfied for the  time interval $[0, t']$. Second, we rewrite $\varphi = \Theta_{[a, b]}(\varphi_1 \vee \varphi_2)$ as $\hat\varphi = \Theta_{[a, b]}\varphi_1 \vee \Theta_{[a, b]}\varphi_2$, where $\theta \in \{\mathsf{F}, \mathsf{G}\}$. In this operation, the conservatism comes from ``always" operator because $(\bm{x}, t)\models \mathsf{G}_{[a, b]}\varphi_1\vee \mathsf{G}_{[a, b]}\varphi_2 \Rightarrow (\bm{x}, t)\models \mathsf{G}_{[a, b]}(\varphi_1\vee\varphi_2)$ while the other direction does not hold in general. For ``eventually" operator, though, this operation introduces no conservatism. {We further note that rewriting an ``until" operator as a conjunction of an ``always" operator and an ``eventually" operator is also used in the CBF-based approaches for non-nested STL formulas  \cite{lindemann2018,lindemann2019}, whereas the ``disjunction" operator is not addressed. Therefore, our approach is no more conservative compared to existing CBF-based approaches.}
   \end{remark}

\subsection{Design of CBFs}

In this subsection, we will use the sTLT $\mathcal{T}_{\hat{\varphi}}$ to guide the CBF design for a given STL formula $\varphi$. The intuition is that, we will design a time interval encoding and appropriate CBFs to enforce the synthesized system trajectory to satisfy the sTLT $\mathcal{T}_{\hat{\varphi}}$ as per the conditions given in Definitions \ref{Def:PathSaf} and \ref{Def:TreeSaf}.

\noindent \subsubsection{Time encoding for the sTLT}

Before proceeding, the following notations are needed. Given an STL operator $\Theta \in \{\wedge, \vee,  \mathsf{F}_{[a, b]}, \mathsf{G}_{[a, b]}\}$, define the possible start time (interval) of $\Theta$ (i.e., time to evaluate the satisfaction of $\varphi_1 \Theta \varphi_2$ or $\Theta \varphi$) as
	\begin{equation}\label{timecoding1}
		[\underline{t}(\Theta), \bar{t}(\Theta)]:=\begin{cases}
			[0, 0], & \mbox{if } \Theta\in \{\wedge, \vee\}, \\
			[a, b], & \mbox{if } \Theta\in \{\mathsf{F}_{[a, b]}\},\\
			[a, a], & \mbox{if } \Theta\in \{ \mathsf{G}_{[a, b]}\}.
		\end{cases}
	\end{equation}
The start time for logic operators $\wedge$ and $\vee$ is 0. For the temporal operator $\mathsf{G}_{[a,b]}$, the start time is $a$. Note that for the temporal operator $\mathsf{F}_{[a,b]}$, any time instant in the interval $[a, b]$ fulfills item ii) of Definition \ref{Def:PathSaf}. To accommodate this uncertainty, we set the start time for $\mathsf{F}_{[a,b]}$ to be the interval $[a, b]$. 
 
In addition, we define the duration of $\Theta$ as
	\begin{equation}\label{timecoding1}
		\mathcal{D}(\Theta): =\begin{cases}
			0, & \mbox{if } \Theta\in \{\wedge, \vee, \mathsf{F}_{[a, b]}\}, \\
			b-a, & \mbox{if } \Theta\in \{ \mathsf{G}_{[a, b]}\}.
		\end{cases}
	\end{equation}

The root node of $\mathcal{T}_{\hat\varphi}$ is denoted by $\mathbb{X}_{\text{root}}$. Let $\mathbb{X}$ be the set which collects all the set nodes of the sTLT $\mathcal{T}_{\hat\varphi}$. For a set node $\mathbb{X}_i\in \mathbb{X}$, define $[\underline{t}_s(\mathbb{X}_i), \bar{t}_s(\mathbb{X}_i)]$ and $\mathcal{D}(\mathbb{X}_i)$ as the possible start time (interval) and the duration of $\mathbb{X}_i$, respectively. $\text{PA}(\mathbb{X}_i)$ denotes the parent of node $\mathbb{X}_i$. Therefore, one has that $\text{PA}(\mathbb{X}_i)$ is an operator node and $\text{PA}(\text{PA}(\mathbb{X}_i))$ is a set node.

Now, the calculation of the start time (interval) for each set node $\mathbb{X}_i$ (which is needed for ensuring the satisfaction of the sTLT $\mathcal{T}_{\hat\varphi}$ as shown in Theorem \ref{thm1}) is outlined in Algorithm \ref{alg:timeInterval}. 

\begin{algorithm}
\caption{\textit{calculateStartTimeInterval}} \label{alg:timeInterval}
\begin{algorithmic}[1]
\Require The sTLT $\mathcal{T}_{\hat\varphi}$.
\Ensure $\underline{t}_s(\mathbb{X}_i), \bar{t}_s(\mathbb{X}_i), \mathcal{D}(\mathbb{X}_i), \forall \mathbb{X}_i$.
\State $\begin{aligned}&\underline{t}_s(\mathbb{X}_{\text{root}})\leftarrow 
 0, \bar{t}_s(\mathbb{X}_{\text{root}})\leftarrow 
 0, \mathcal{D}(\mathbb{X}_{\text{root}})\leftarrow 
 0
 \end{aligned}$
\For {each non-root node $\mathbb{X}_i$ of $\mathcal{T}_{\varphi}$ through a top-down traversal,}
\State { $\begin{aligned}
&\underline{t}_{s}(\mathbb{X}_i)\leftarrow \underline{t}_{s}(\text{PA}(\text{PA}(\mathbb{X}_i)))+\underline{t}(\text{PA}(\mathbb{X}_i)), \\
& \bar{t}_{s}(\mathbb{X}_i)\leftarrow \bar{t}_{s}(\text{PA}(\text{PA}(\mathbb{X}_i)))+\bar{t}(\text{PA}(\mathbb{X}_i)),
\end{aligned}$}
\State {$\mathcal{D}(\mathbb{X}_i)\leftarrow \mathcal{D}(\text{PA}(\text{PA}(\mathbb{X}_i)))+\mathcal{D}(\text{PA}(\mathbb{X}_i)),$}
\EndFor
\end{algorithmic}
\end{algorithm}

Due to the uncertainty of the start time for temporal operator $\mathsf{F}_{[a,b]}$, one can see that the start times of some set nodes $\mathbb{X}_i$ may be unknown and belong to an interval after running Algorithm \ref{alg:timeInterval}. In the following, we show how to update the start times of such set nodes $\mathbb{X}_i$ online. 

We develop an event-triggered scheme to update the start times. For each set node $\mathbb{X}_i$ such that $\underline{t}_{s}(\mathbb{X}_i)\neq \bar{t}_{s}(\mathbb{X}_i)$, an event is triggered at time $t$ if:
\begin{equation}\label{trigger_condition}
    t \in [\underline{t}_{s}(\mathbb{X}_i)), \bar{t}_{s}(\mathbb{X}_i))] \;\wedge \; \bm{x}(t)\in \mathbb{X}_i.
\end{equation}
Once an event is triggered, we run Algorithm \ref{alg:onlineupdate} to update the start times of the set nodes. Note that once an event is triggered for a set node, its start time is fixed. 

\begin{algorithm}
\caption{\textit{onlineUpdate}} \label{alg:onlineupdate}
\begin{algorithmic}[1]
\Require The sTLT $\mathcal{T}_{\hat\varphi}$ and the triggering time $t$.
\Ensure the updated $\underline{t}_{s}(\mathbb{X}_i), \bar {t}_{s}(\mathbb{X}_i), \forall \mathbb{X}_i$.
\For {each $\mathbb{X}_i$ such that the triggering condition (\ref{trigger_condition}) is satisfied,}
\State { $\underline{t}_{s}(\mathbb{X}_i)\leftarrow t, \bar{t}_{s}(\mathbb{X}_i)\leftarrow t,$}
\EndFor
\For {each set node $\mathbb{X}_i$ such that $\underline{t}_{s}(\mathbb{X}_i)\neq \bar{t}_{s}(\mathbb{X}_i)$ through a top-down traversal,}
\State  {run line 3 of Algorithm \ref{alg:timeInterval},}
\EndFor
\end{algorithmic}
\end{algorithm}

\begin{example*}[continued]
    Let us continue with Example \ref{example1} to demonstrate the event-triggered online update scheme. 

    First, one can calculate the start time intervals for each set node $\mathbb{X}_i, i=\{0, 1, \cdots, 9\}$ in the sTLT $\mathcal{T}_{\hat\varphi}$ (see Fig. \ref{Fig:example}) according to Algorithm \ref{alg:timeInterval}, which give $[\underline{t}_s(\mathbb{X}_0), \bar{t}_s(\mathbb{X}_0)]=[\underline{t}_s(\mathbb{X}_1), \bar{t}_s(\mathbb{X}_1)]=[\underline{t}_s(\mathbb{X}_2), \bar{t}_s(\mathbb{X}_2)]=[0, 0]$, $[\underline{t}_s(\mathbb{X}_3), \bar{t}_s(\mathbb{X}_3)]=[\underline{t}_s(\mathbb{X}_4), \bar{t}_s(\mathbb{X}_4)]=[0, 15]$, $[\underline{t}_s(\mathbb{X}_5), \bar{t}_s(\mathbb{X}_5)]=[2, 17]$, $[\underline{t}_s(\mathbb{X}_6), \bar{t}_s(\mathbb{X}_6)]=[\underline{t}_s(\mathbb{X}_7), \bar{t}_s(\mathbb{X}_7)]=[0, 15], [\underline{t}_s(\mathbb{X}_8), \bar{t}_s(\mathbb{X}_8)]=[0, 15]$, and $[\underline{t}_s(\mathbb{X}_9), \bar{t}_s(\mathbb{X}_9)]=[5, 25]$. Note that due to the `eventually' operator $\mathsf{F}_{[0, 15]}$ which appears at the outermost layer of the nested STL formula $\varphi=\mathsf{F}_{[0, 15]}(\mathsf{G}_{[2, 10]}\mu_1 \vee \mu_2  \mathsf{U}_{[5, 10]}\mu_3)$, the start times of all the set nodes that belong to temporal fragments (i.e., $\mathbb{X}_i, i\in \{3, 4, 5, 8, 9\}$) are uncertain (i.e., belong to an interval). To reduce conservatism, we update the start time intervals of these set nodes online using the event-triggered scheme (\ref{trigger_condition}).

    Assume that at time instant $t=5s$, the event-triggered condition (\ref{trigger_condition}) is satisfied for set node $\mathbb{X}_4$, i.e., $5\in [\underline{t}_s(\mathbb{X}_4), \bar{t}_s(\mathbb{X}_4)]=[0, 15]$ and $\bm{x}(5)\in \mathbb{X}_4$, then Algorithm \ref{alg:onlineupdate} is activated. Following lines 1-3 of Algorithm \ref{alg:onlineupdate}, one has that $\underline{t}_s(\mathbb{X}_4)=\bar{t}_s(\mathbb{X}_4)=5$ (i.e., the start time of set node $\mathbb{X}_4$ is fixed). Then one can further fix the start times of the set nodes $\mathbb{X}_6=\mathbb{X}_7=\mathbb{X}_8$ (which are $5s$) and update the start time interval of the set node $\mathbb{X}_9$ as $[\underline{t}_s(\mathbb{X}_9), \bar{t}_s(\mathbb{X}_9)]=[10, 15]$.
\end{example*}

\subsubsection{CBF design for each temporal fragment} First, we have the following definition.

\begin{definition}
    We call a temporal fragment $f_j$ the \emph{predecessor} of another temporal fragment $f_i$ (or $f_i$ the \emph{successor} of $f_j$) if there exists a complete path $\bm p$ such that $\bm p = ... f_j \bm{p}^\prime f_i ...$ where  $\bm{p}^\prime$ does not contain any temporal fragments. We call $f_i$ a \emph{top-layer temporal fragment} if  $f_i$ has no predecessor temporal fragment.
\end{definition}

Given the sTLT $\mathcal{T}_{\hat\varphi}$ for a nested STL formula $\varphi$, we need to design one CBF for each temporal fragment $f_i$ in view of the item iv) of Definition \ref{Def:PathSaf}. Denote by $f_i=\Theta_{f_i}\mathbb{X}_{f_i}$, where $\Theta_{f_i}$ and $\mathbb{X}_{f_i}$ are the temporal operator node and the set node contained in $f_i$. Note that $\mathbb{X}_{f_i}$ is represented by its value function $\mathbb{X}_{f_i} = \{x: h_{\mathbb{X}_{f_i}}(x)\ge 0 \}$. We require the corresponding CBF $\mathfrak{b}_i(x, t)$ to satisfy the following conditions:
\begin{itemize}
  \item[1)] $\mathfrak{b}_i(x, t)$ is continuously differentiable and is defined over $\mathcal{C}(t)\times[\min\{{t}_e(\text{PA}(\text{PA}(\mathbb{X}_{f_i}))), \underline{t}_s(\mathbb{X}_{f_i})\}, t_e(\mathbb{X}_{f_i})]$;
  
  \item[2)] $\mathfrak{b}_i(x, t) \leq h_{\mathbb{X}_{f_i}}(x), \forall t\in [\bar{t}_s(\mathbb{X}_{f_i}), t_e(\mathbb{X}_{f_i})]$,
\end{itemize}
where $t_e(\mathbb{X}_{i})=\bar{t}_s(\mathbb{X}_{i})+\mathcal{D}(\mathbb{X}_i)$ {(recall $\mathcal{D}(\mathbb{X}_i)$ is computed in Algorithm \ref{alg:timeInterval})} can be interpreted as the end time of $\mathbb{X}_{i}$. Here $\underline{t}_s(\mathbb{X}_{f_i}),\bar{t}_s(\mathbb{X}_{f_i})$ and ${t}_e(\mathbb{X}_{f_i})$ are updated online according to Algorithm \ref{alg:onlineupdate}.

Define the \emph{time domain} of the CBF $\mathfrak{b}_i(x,t)$ as
\begin{equation}\label{eq:timedomain}  [\underline{t}_{\mathfrak{b}_i},\bar{t}_{\mathfrak{b}_i}]:=[\min\{{t}_e(\text{PA}(\text{PA}(\mathbb{X}_{f_i}))), \underline{t}_s(\mathbb{X}_{f_i})\}, t_e(\mathbb{X}_{f_i})].
\end{equation}
This is to guarantee that the CBF $\mathfrak{b}_i$, which corresponds to the temporal fragment $f_i$, is activated at  ${t}_e(\text{PA}(\text{PA}(\mathbb{X}_{f_i})))$, for which the activation of the predecessor of $f_i$ ends, or at  $\underline{t}_s(\mathbb{X}_{f_i})$, for which $f_i$ becomes active at its earliest, whichever comes earlier. A formal statement on this is given in Lemma \ref{lem:time_sequence}.

\begin{lemma} \label{lem:time_sequence}
    Let $f_i$ be a non top-layer temporal fragment, and $f_j$ be the predecessor of $f_i$ in the constructed sTLT. Denote their respective CBFs $\mathfrak{b}_j(x, t), \mathfrak{b}_i(x, t)$. Then $ \underline{t}_{\mathfrak{b}_j} \leq  \underline{t}_{\mathfrak{b}_i} \leq \bar{t}_{\mathfrak{b}_j} \leq \bar{t}_{\mathfrak{b}_i}  $.
\end{lemma}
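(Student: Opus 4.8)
The statement is a chain of four inequalities among the time-domain endpoints of two CBFs associated with a temporal fragment $f_i$ and its predecessor $f_j$. Recall from \eqref{eq:timedomain} that
\[
[\underline{t}_{\mathfrak{b}_i},\bar{t}_{\mathfrak{b}_i}] = [\min\{t_e(\mathrm{PA}(\mathrm{PA}(\mathbb{X}_{f_i}))),\underline{t}_s(\mathbb{X}_{f_i})\},\, t_e(\mathbb{X}_{f_i})],
\]
and analogously for $\mathfrak{b}_j$. The key structural fact I would establish first is the relationship between the two fragments in the tree: since $f_j$ is the predecessor of $f_i$, there is a complete path $\bm p = \ldots f_j \bm p' f_i \ldots$ with $\bm p'$ containing no temporal fragment, so $\bm p'$ consists only of Boolean operator nodes; hence the set node $\mathbb{X}_{f_j}$ is exactly $\mathrm{PA}(\mathrm{PA}(\mathbb{X}_{f_i}))$ up to the intervening Boolean nodes, and because Boolean operators contribute $[0,0]$ to the start-time recursion and $0$ to the duration (cf.\ \eqref{timecoding1} and Algorithm \ref{alg:timeInterval}), one gets $t_e(\mathrm{PA}(\mathrm{PA}(\mathbb{X}_{f_i}))) = t_e(\mathbb{X}_{f_j})$. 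This identification is the linchpin.

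**The four inequalities.** With that identification I would argue as follows. (i) $\underline{t}_{\mathfrak{b}_j}\le \underline{t}_{\mathfrak{b}_i}$: by definition $\underline{t}_{\mathfrak{b}_i}=\min\{t_e(\mathbb{X}_{f_j}),\underline{t}_s(\mathbb{X}_{f_i})\}$; since $t_e(\mathbb{X}_{f_j}) = \bar t_s(\mathbb{X}_{f_j})+\mathcal D(\mathbb{X}_{f_j}) \ge \underline t_s(\mathbb{X}_{f_j}) \ge \underline t_{\mathfrak b_j}$ (the last step because $\underline{t}_{\mathfrak b_j}$ is itself a $\min$ over $\underline t_s(\mathbb{X}_{f_j})$ and the predecessor's end time, and one checks by induction on tree depth that the predecessor's end time is $\le \underline t_s(\mathbb{X}_{f_j})$ — or, if $f_j$ is top-layer, $\underline{t}_{\mathfrak b_j}=\underline t_s(\mathbb{X}_{f_j})$ directly), and also $\underline t_s(\mathbb{X}_{f_i}) \ge \underline t_s(\mathbb{X}_{f_j}) \ge \underline t_{\mathfrak b_j}$ because the top-down recursion in Algorithm \ref{alg:timeInterval} only adds nonnegative quantities $\underline t(\Theta)\ge 0$ along the path from $\mathbb{X}_{f_j}$ down to $\mathbb{X}_{f_i}$; thus the $\min$ defining $\underline{t}_{\mathfrak b_i}$ is over two quantities each $\ge \underline{t}_{\mathfrak b_j}$. (ii) $\underline{t}_{\mathfrak{b}_i}\le \bar{t}_{\mathfrak b_j}$: we have $\bar t_{\mathfrak b_j}=t_e(\mathbb{X}_{f_j})$, and $\underline{t}_{\mathfrak b_i}=\min\{t_e(\mathbb{X}_{f_j}),\underline t_s(\mathbb{X}_{f_i})\}\le t_e(\mathbb{X}_{f_j})$ trivially. (iii) $\bar{t}_{\mathfrak b_j}\le \bar{t}_{\mathfrak b_i}$: this is $t_e(\mathbb{X}_{f_j})\le t_e(\mathbb{X}_{f_i})$, which follows because $t_e(\mathbb{X}_{f_i}) = \bar t_s(\mathbb{X}_{f_i})+\mathcal D(\mathbb{X}_{f_i})$ and, by the recursions in Algorithm \ref{alg:timeInterval}, both $\bar t_s(\mathbb{X}_{f_i})\ge \bar t_s(\mathbb{X}_{f_j})$ and the accumulated duration $\mathcal D(\mathbb{X}_{f_i})\ge \mathcal D(\mathbb{X}_{f_j})$ (moving down the path adds $\bar t(\Theta)\ge 0$ and $\mathcal D(\Theta)\ge 0$ respectively), and in fact one sees $t_e(\mathbb{X}_{f_i}) \ge \bar t_s(\mathbb{X}_{f_j}) + \bar t(\Theta_{f_i}) + \mathcal D(\mathbb{X}_{f_j}) + \mathcal D(\Theta_{f_i}) \ge t_e(\mathbb{X}_{f_j})$ since the step from $\mathbb{X}_{f_j}$ to its grandchild $\mathbb{X}_{f_i}$ adds the nonnegative start-offset and duration of $\Theta_{f_i}$ on top of $t_e(\mathbb{X}_{f_j})$. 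I would treat the possibility of several Boolean nodes between $f_j$ and $f_i$ by noting each contributes $0$ to both start time and duration, so only the single temporal step $\Theta_{f_i}$ matters.

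**Online-update caveat.** A subtlety the proof must address: the quantities $\underline t_s, \bar t_s, t_e$ are updated online by Algorithm \ref{alg:onlineupdate} when events trigger. I would note that an event for a set node $\mathbb{X}$ sets $\underline t_s(\mathbb{X})=\bar t_s(\mathbb{X})=t$ for some $t\in[\underline t_s(\mathbb{X}),\bar t_s(\mathbb{X})]$ (the original interval), i.e.\ it only collapses an interval to a point inside it, and then re-runs the same recursion for descendants. Hence all the monotonicity relations among start-time endpoints and durations used above are preserved under the update — formally, each inequality I rely on is of the form "endpoint of descendant $\ge$ endpoint of ancestor $+$ (nonnegative offset)", and replacing the ancestor's interval by a point inside it and recomputing the descendant keeps this valid. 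I would state this as a short lemma-internal remark rather than belabor it.

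**Main obstacle.** The genuinely delicate point is the first inequality $\underline{t}_{\mathfrak b_j}\le\underline{t}_{\mathfrak b_i}$, because $\underline{t}_{\mathfrak b_j}$ is itself a minimum involving $f_j$'s \emph{own} predecessor's end time, so one cannot simply say $\underline t_{\mathfrak b_j}=\underline t_s(\mathbb{X}_{f_j})$; this forces either an induction over the nesting depth of temporal fragments along the path (base case: top-layer fragment, where $\mathrm{PA}(\mathrm{PA}(\cdot))$ reaches the root and $t_e(\mathbb{X}_{\mathrm{root}})=0$, so the $\min$ is governed by $\underline t_s$), or a cleaner global claim that for every temporal fragment $f$, $\underline{t}_{\mathfrak b_f}\le \underline t_s(\mathbb{X}_f)$ and $\underline{t}_{\mathfrak b_f}\le t_e(\text{predecessor})$, proved once and then applied. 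Everything else is bookkeeping on the nonnegative increments in Algorithm \ref{alg:timeInterval}.
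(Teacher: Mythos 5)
Your proposal is correct and follows essentially the same route as the paper: both arguments read the three inequalities directly off the definition of the time domain in \eqref{eq:timedomain} together with the monotonicity of the start-time and duration recursions in Algorithm \ref{alg:timeInterval} (your extra care about intervening Boolean nodes and the online update only makes explicit what the paper leaves implicit). The "main obstacle" you flag is not actually one: $\underline{t}_{\mathfrak{b}_j}\le \underline{t}_s(\mathbb{X}_{f_j})$ holds immediately because $\underline{t}_{\mathfrak{b}_j}$ is a minimum of two quantities, one of which is $\underline{t}_s(\mathbb{X}_{f_j})$, so no induction over nesting depth is needed.
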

\begin{proof}
It can be deduced from the tree structure that the predecessor of a non top-layer temporal fragment is unique. Denote the set nodes in the fragments $f_j$ and  $f_i$ are $\mathbb{X}_{f_j}, \mathbb{X}_{f_i}$, respectively. The inequalities can be obtained as follows: 1) in view of \eqref{eq:timedomain} and Algorithm \ref{alg:timeInterval}, $\underline{t}_{\mathfrak{b}_j} \leq  {t}_{e}(\mathbb{X}_{f_j})$ and $\underline{t}_{\mathfrak{b}_j} \leq  \underline{t}_{s}(\mathbb{X}_{f_j}) \leq \underline{t}_{s}(\mathbb{X}_{f_i}) $, thus $\underline{t}_{\mathfrak{b}_j} \leq \underline{t}_{\mathfrak{b}_i} = \min({t}_{e}(\mathbb{X}_{f_j}), \underline{t}_{s}(\mathbb{X}_{f_i}) )$; 2) from \eqref{eq:timedomain}, $\underline{t}_{\mathfrak{b}_i} \leq t_{e}(\mathbb{X}_{f_j})=\bar{t}_{\mathfrak{b}_j} $; 3) from Algorithm \ref{alg:timeInterval} and the definition of $t_e(\cdot)$, 
$\bar{t}_{\mathfrak{b}_j} =t_{e}(\mathbb{X}_{f_j}) = \bar{t}_s(\mathbb{X}_j) + \mathcal{D}(\mathbb{X}_j)\leq   \bar{t}_s(\mathbb{X}_i) + \mathcal{D}(\mathbb{X}_i) =  t_{e}(\mathbb{X}_{f_i}) =\bar{t}_{\mathfrak{b}_i} $.

\end{proof}

If $f_i$ is not a top-layer temporal fragment, then the third condition on the corresponding CBF $\mathfrak{b}_i(x, t)$ is
\begin{itemize}
  \item[3)] $  \mathfrak{b}_i(x,\underline{t}_{\mathfrak{b}_i}) \ge 0, \forall x\in \{x: \mathfrak{b}_j(x,\underline{t}_{\mathfrak{b}_i}) \ge 0\}$, where $f_j$ is the unique predecessor of $f_i$.
\end{itemize}
Note that $\mathfrak{b}_j(x,\underline{t}_{\mathfrak{b}_i})$ is well-defined in view of Lemma \ref{lem:time_sequence}.

\begin{proposition} \label{prop:cbf_for_complete_path}
Given a complete path $\bm p$ and an initial condition $x_0$, let $f_0, f_1, ..., f_N$ be the sequence of temporal fragments contained in $\bm p$ and $\mathfrak{b}_0, \mathfrak{b}_1, \ldots, \mathfrak{b}_N$ the corresponding CBFs. Assume that each $\mathfrak{b}_i, i\in 0, \ldots, N$ satisfies the conditions 1)-3). Furthermore, if $\mathfrak{b}_0(x_0,0)\ge 0$ and each of the CBFs $\mathfrak{b}_i$ satisfies the condition \eqref{eq:cbf_condition} during the corresponding time domain, then the resulting trajectory satisfies this complete path $\bm p$.
\end{proposition}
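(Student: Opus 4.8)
The plan is to exhibit a time interval coding for $\bm p$ in the sense of Definition \ref{Def:PathSaf} and then verify that the trajectory produced under conditions 1)--3) and \eqref{eq:cbf_condition} respects it. As witness coding I take the one computed by Algorithm \ref{alg:timeInterval} and refined online by Algorithm \ref{alg:onlineupdate}: each Boolean child keeps the interval of its parent set node, each $\mathsf{G}_{[a,b]}$ child is the parent interval shifted by $[a,b]$, and each $\mathsf{F}_{[a,b]}$ child is the parent interval shifted by the triggering instant $t'$ of the event \eqref{trigger_condition} (or by $\bar t(\mathsf{F}_{[a,b]})$ if no earlier event fires). With this choice items i)--iii) of Definition \ref{Def:PathSaf} hold by construction of the two algorithms, so the whole of the proof reduces to item iv): for every set node $\mathbb{X}_i$ on $\bm p$, $\bm x(t)\in\mathbb{X}_i$ on the interval assigned to $\mathbb{X}_i$.

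The heart of the argument is a hand-over (forward-invariance) induction along the chain of temporal fragments $f_0,f_1,\dots,f_N$ of $\bm p$; note that on a single path the predecessor of $f_{i}$ is exactly $f_{i-1}$ by the tree structure. I would show, by induction on $i$, that $\bm x(t)\in\mathcal{C}_i(t):=\{x:\mathfrak{b}_i(x,t)\ge 0\}$ for all $t\in[\underline t_{\mathfrak{b}_i},\bar t_{\mathfrak{b}_i}]$. Base case $i=0$: since $f_0$ is top-layer, $\text{PA}(\text{PA}(\mathbb{X}_{f_0}))$ sits above every temporal operator of $\bm p$, so $t_e(\text{PA}(\text{PA}(\mathbb{X}_{f_0})))=0$ while $\underline t_s(\mathbb{X}_{f_0})\ge 0$, whence $\underline t_{\mathfrak{b}_0}=0$; the hypothesis $\mathfrak{b}_0(x_0,0)\ge 0$ gives $x_0\in\mathcal{C}_0(0)$, and since $\mathfrak{b}_0$ is a valid CBF and the control renders \eqref{eq:cbf_condition} true on $[0,\bar t_{\mathfrak{b}_0}]$, the comparison-lemma argument recalled after Definition \ref{def:cbf} yields $\bm x(t)\in\mathcal{C}_0(t)$ on the whole time domain. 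Inductive step: Lemma \ref{lem:time_sequence} gives $\underline t_{\mathfrak{b}_i}\le \underline t_{\mathfrak{b}_{i+1}}\le \bar t_{\mathfrak{b}_i}$, so by the induction hypothesis $\bm x(\underline t_{\mathfrak{b}_{i+1}})\in\mathcal{C}_i(\underline t_{\mathfrak{b}_{i+1}})$; condition 3) on $\mathfrak{b}_{i+1}$ then forces $\bm x(\underline t_{\mathfrak{b}_{i+1}})\in\mathcal{C}_{i+1}(\underline t_{\mathfrak{b}_{i+1}})$, and the same CBF argument applied to $\mathfrak{b}_{i+1}$ on $[\underline t_{\mathfrak{b}_{i+1}},\bar t_{\mathfrak{b}_{i+1}}]$ propagates membership to the entire time domain of $\mathfrak{b}_{i+1}$, closing the induction.

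Next I would turn $\mathcal{C}_i$-invariance into set-node membership. Condition 2) gives $\mathcal{C}_i(t)\subseteq\mathbb{X}_{f_i}$ for $t\in[\bar t_s(\mathbb{X}_{f_i}),t_e(\mathbb{X}_{f_i})]$, an interval contained in the time domain of $\mathfrak{b}_i$; combined with the invariance just established, $\bm x(t)\in\mathbb{X}_{f_i}$ throughout $[\bar t_s(\mathbb{X}_{f_i}),t_e(\mathbb{X}_{f_i})]$, which is precisely the interval the coding assigns to $\mathbb{X}_{f_i}$ once the relevant events have fired. For a $\mathsf{G}$-fragment this is item iv) directly; for an $\mathsf{F}$-fragment it shows in particular that $\bm x(t)\in\mathbb{X}_{f_i}$ holds at the deadline $\bar t_s(\mathbb{X}_{f_i})$ at the latest, so the event \eqref{trigger_condition} does fire within $[\underline t_s(\mathbb{X}_{f_i}),\bar t_s(\mathbb{X}_{f_i})]$ and the $t'$ it selects is an admissible witness for item ii) with $\bm x(t')\in\mathbb{X}_{f_i}$. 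Finally, for the set nodes of $\bm p$ lying outside every temporal fragment: a node reached through a $\wedge$-node is, by Case a), a superset of the conjunction node preceding it on $\bm p$, so membership transfers upward; and for a node reached through a $\vee$-node (interval $[0,0]$) I would use Cases c)--d) together with Lemma \ref{lem2}, namely that the root of each temporal subtree traversed by $\bm p$ equals the corresponding maximal/avoidance reachable set, so the trajectory constructed from $x_0$ already certifies $x_0$ in that root, and since a $\vee$-node is a superset of its child, $x_0$ lies in all its ancestors up to $\mathbb{X}_0$. This gives item iv) for every set node on $\bm p$, hence $(\bm x,0)\cong\bm p$.

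The step I expect to be the main obstacle is the bookkeeping that makes the two preceding steps rigorous: one must check that the interval $[\bar t_s(\mathbb{X}_{f_i}),t_e(\mathbb{X}_{f_i})]$ in condition 2) genuinely coincides, after the online updates of Algorithm \ref{alg:onlineupdate}, with the interval the coding rules i)--iii) assign to $\mathbb{X}_{f_i}$, and that the online-updated assignment is itself a legal coding; this forces one to unwind Algorithms \ref{alg:timeInterval}--\ref{alg:onlineupdate} and keep the start-time/duration accounting consistent across the event-triggered updates, especially for $\mathsf{F}$ operators whose witness time is only pinned down at run time. A secondary subtlety is that when a temporal subtree traversed by $\bm p$ carries a top-level conjunction of two temporal subformulas, certifying $x_0$ in that conjunction root relies on the companion sub-paths through the other conjunct, handled by the very same control, so the single-path statement is invoked together with its siblings when fed into the tree-level result.
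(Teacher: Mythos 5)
Your proposal is correct and follows essentially the same route as the paper's proof: a hand-over induction along the chain of temporal fragments, using condition 3) at each $\underline{t}_{\mathfrak{b}_{i+1}}$ (justified by Lemma \ref{lem:time_sequence}) and the CBF condition \eqref{eq:cbf_condition} for forward invariance of each $\mathcal{C}_i(t)$, then condition 2) to convert barrier nonnegativity into $\bm{x}(t)\in\mathbb{X}_{f_i}$ on $[\bar{t}_s(\mathbb{X}_{f_i}),t_e(\mathbb{X}_{f_i})]$, and finally checking that these intervals constitute a valid time interval coding. The bookkeeping you flag as the main obstacle is exactly the step the paper compresses into ``one verifies that $[\bar{t}_s(\mathbb{X}_{f_i}), t_e(\mathbb{X}_{f_i})]$ is a valid time interval coding,'' so your extra care there (and on the Boolean set nodes) elaborates rather than diverges from the published argument.
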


\begin{proof}
 Without loss of generality, assume that $f_i$ is the predecessor of $f_{i+1}$, $i=0,1,...,N-1$. For the top-level temporal fragment $f_0$, since $\mathfrak{b}_0(x_0,0)\ge 0$ and the CBF condition \eqref{eq:cbf_condition} holds in $[0,\bar{t}_{\mathfrak{b}_0}]$, we have $ \mathfrak{b}_0(\bm{x}(t),t) \ge 0, \forall t\in [0,\bar{t}_{\mathfrak{{b}}_0}]$. Now assume $\mathfrak{b}_i(\bm{x}(t),t) \ge 0,  \forall t\in [\underline{t}_{\mathfrak{b}_i},\bar{t}_{\mathfrak{b}_i}]$. From condition 3), $\mathfrak{b}_{i+1}(\bm{x}(\underline{t}_{\mathfrak{b}_{i+1}}),\underline{t}_{\mathfrak{b}_{i+1}}) \ge  0$. In addition, the CBF condition \eqref{eq:cbf_condition} of $\mathfrak{b}_{i+1}$ is satisfied for  $\forall t\in [\underline{t}_{\mathfrak{b}_{i+1}},\bar{t}_{\mathfrak{{b}_{i+1}}}]$, and then   $ \mathfrak{b}_{i+1}(\bm{x}(t),t) \ge 0, \forall t\in [\underline{t}_{\mathfrak{b}_{i+1}},\bar{t}_{\mathfrak{{b}_{i+1}}}]$. Inductively, we obtain $\mathfrak{b}_{i}(\bm{x}(t),t) \ge 0, \forall t \in [\underline{t}_{\mathfrak{b}_{i}},\bar{t}_{\mathfrak{{b}_{i}}}]$ for $i = 0,1,2, ..., N$.

In addition, $ \mathfrak{b}_{i}(\bm{x}(t),t) \ge 0, \forall t \in [\underline{t}_{\mathfrak{b}_{i}},\bar{t}_{\mathfrak{{b}_{i}}}]$ implies that $\bm{x}(t)\in \mathbb{X}_i, \forall t\in [\bar{t}_s(\mathbb{X}_{f_i}), {t}_e(\mathbb{X}_{f_i})]$ from condition 2). One verifies that $[\bar{t}_s(\mathbb{X}_{f_i}), {t}_e(\mathbb{X}_{f_i})], \forall f_i$  is a valid time interval coding of the complete path from Definition \ref{Def:PathSaf} items i-iii). Thus, the resulting trajectory satisfies the complete path $\bm p$.\end{proof}

Up to now, we have shown the design of the CBFs and the online update of their time domains for each temporal fragment in the sTLT $\mathcal{T}_{\hat{\varphi}}$. In what follows, we will show how to incorporate them to conduct the online control synthesis.

\subsection{The overall algorithm}

In this subsection, we divide the nested STL formulas into 2 classes, i.e., nested STL formulas that contain no $\vee$ operator and nested STL formulas that contain $\vee$ operator. We differentiate these two cases because they have different sTLT satisfaction conditions as discussed in Remark \ref{rem:treeSatisfy}. 

\subsubsection{Nested STL formulas that contain no $\vee$ operator}
Let $\varphi$ be a nested STL formula that contains no $\vee$ operator. Then, the corresponding sTLT  $\mathcal{T}_{\hat\varphi}$ contains no operator nodes $\vee$. Let $\Pi$ be the set which collects all the temporal fragments $f_i$. Denote by $\mathfrak{b}_i$ the CBF designed for the temporal fragment $f_i$.  Note that when the start time interval is updated online (Algorithm \ref{alg:onlineupdate}), the time domain of the CBF $\mathfrak{b}_i$ will also be updated correspondingly. The continuous-time control synthesis problem (Problem \ref{problem}) can be solved by the following program:
\begin{equation} \label{eq:online_control_wo_disjunction}
    \begin{aligned}
      &\hspace{0.5cm}\min_{u\in U} \quad u^TQu \\
  &\text{s.t.} \quad \theta_i(t)\Big(\frac{\partial \mathfrak{b}_i(x, t)}{\partial x}f(x, u) +\frac{\partial \mathfrak{b}_i(x, t)}{\partial t} \\
  &\hspace{3.5cm} +\alpha_i(\mathfrak{b}_i(x, t))\Big)\ge 0, \forall f_i\in \Pi,
    \end{aligned}
\end{equation} 
where $\theta_i(t) = \begin{cases} 
1, & \text{if } t\in [\underline{t}_{\mathfrak{b}_i},\bar{t}_{\mathfrak{b}_i}]\\
0, & \text{otherwise}
\end{cases}$ is an indicator function assigned to each CBF $\mathfrak{b}_i$. Note that since $\underline{t}_{\mathfrak{b}_i}, \bar{t}_{\mathfrak{b}_i}$ are updated online, $\theta_i(t)$ is also updated online.

\subsubsection{Nested STL formulas that contain $\vee$ operator} Let $\varphi$ be a nested STL formula that contains $\vee$ operators.
Then, as discussed in Remark \ref{rem:toplayer}, the operator nodes $\vee$ only appear in the top layers of $\mathcal{T}_{\hat{\varphi}}$.

Recall from Remark \ref{rem:treeSatisfy} that to obtain $(\bm x,0) \cong \mathcal{T}_{\hat{\varphi}}$, $(\bm x, 0)$ needs to satisfy at least one branch of complete paths. Deciding which group of complete paths to satisfy can be done offline or online. In the following we show the case where the branch is chosen offline.

Without loss of generality, let $\Pi_l$ be the set which collects all the temporal fragments $f_i$ that belongs to the chosen branch. Then the online control synthesis is given by
 \begin{eqnarray} \label{eq:online_syn_lth_group}
  &&\hspace{0.5cm}u = \argmin_{u\in U} \quad u^TQu\\
  &&\hspace{-0.5cm}\text{s.t.} \quad \theta_i(t)\Big(\frac{\partial \mathfrak{b}_i(x, t)}{\partial x}f(x, u) +\frac{\partial \mathfrak{b}_i(x, t)}{\partial t} \nonumber\\
  &&\hspace{3.5cm} +\alpha_i(\mathfrak{b}_i(x, t))\Big)\ge 0, \forall f_i\in \Pi_l,\nonumber \nonumber
\end{eqnarray}
where $\mathfrak{b}_i$ is the designed CBF according to $f_i$, $\theta_i(t) = \begin{cases}
1, & \text{if } t\in [\underline{t}_{\mathfrak{b}_i},\bar{t}_{\mathfrak{b}_i}]\\
0, & \text{otherwise}
\end{cases}$ is an indicator function assigned to each CBF $\mathfrak{b}_i$. Similar to the previous case, $\theta_i(t)$ is updated online by Algorithm \ref{alg:onlineupdate}.

\begin{remark}[Choice of branch]
    The constructed sTLT provides a general guideline on how to choose the branch to satisfy. For example, denote by  $\bm{p}_l =\mathbb{X}_0 \Theta_1^{l} \mathbb{X}_1^{l} \Theta_2^{l}\ldots \Theta_{N_f}^{l} \mathbb{X}_{N_f}^{l}$ an arbitrary complete path in branch $l$. Let $k_l = \argmax_k\{\Theta_k^l = \vee\}$. Then the branch $l$ can be chosen only if the initial state $x_0 \in \mathbb{X}_{k_l}^l$. This condition is evident from the sTLT semantics. One numerical example is given in Case Studies where only one branch out of two can be chosen. Several other factors can be considered when selecting the branch. For example, one can use performance indexes like robustness metrics, optimal energy, shortest path or online re-plan in the presence of environmental uncertainties. This is however out of the scope of this work and will be pursued in the future.
\end{remark}

\begin{remark}[Online CBFs update] \label{rem:online_updates}
   Even though the time domains of the offline designed CBFs change as the start time intervals update online, this does not impose a need to re-compute the barriers from scratch. Instead, a simple translation in time will suffice. To illustrate this point, assume that we have computed two barriers $\mathfrak{b}_j(x, t),  t\in [\underline{t}_{\mathfrak{b}_j},\bar{t}_{\mathfrak{b}_j}]$ and $\mathfrak{b}_i(x, t),  t\in [\underline{t}_{\mathfrak{b}_i},\bar{t}_{\mathfrak{b}_i}]$ for two consecutive temporal fragments $f_j f_i = \Theta_{f_j}\mathbb{X}_{f_j}\Theta_{f_i}\mathbb{X}_{f_i}$. Denote $ \bar{t}_s(\mathbb{X}_{f_j})$ before the update by $t_1$. If, at time $t^\prime\in [\underline{t}_s(\mathbb{X}_{f_j}), \bar{t}_s(\mathbb{X}_{f_j}) ]$, $\underline{t}_s(\mathbb{X}_{f_j})\neq \bar{t}_s(\mathbb{X}_{f_j}) $ and $\bm{x}(t^\prime)$ reaches $\mathbb{X}_{f_j}$, then Algorithm \ref{alg:onlineupdate} updates $\underline{t}_s(\mathbb{X}_{f_j})= \bar{t}_s(\mathbb{X}_{f_j}) = t^\prime $, and, accordingly, the new time domains of the barriers become $[\underline{t}^\prime_{\mathfrak{b}_j},\bar{t}^\prime_{\mathfrak{b}_j}] :=[t^\prime,t^\prime+\mathcal{D}(\mathbb{X}_{f_j})]$ and $[\underline{t}^\prime_{\mathfrak{b}_i},\bar{t}^\prime_{\mathfrak{b}_i}]:=
   [\underline{t}_{\mathfrak{b}_i}+t^\prime - t_1,\bar{t}_{\mathfrak{b}_i} + t^\prime - t_1]$. The updated barriers are $\mathfrak{b}_j^\prime(x, t) = \mathfrak{b}_j(x, t+t_1 - t^\prime),  t\in [\underline{t}^\prime_{\mathfrak{b}_j},\bar{t}^\prime_{\mathfrak{b}_j}]$ and $\mathfrak{b}_i^\prime(x, t) =\mathfrak{b}_i(x, t+t_1 - t^\prime),  t\in [\underline{t}^\prime_{\mathfrak{b}_i},\bar{t}^\prime_{\mathfrak{b}_i}]$, respectively.

\end{remark}

\begin{remark} \label{rem:extension to full STL}
  Recall that the above analysis is done for nested STL formulas as per Definition \ref{nestedSTL}. It is straightforward to extend the results to STL tasks that are given by conjunction and/or disjunction of nested STL formulas, for instance, $\varphi=\mathsf{F}_{[0, 15]}(\mathsf{G}_{[0, 10]}\mu_1 \vee \mu_2  \mathsf{U}_{[5, 10]}\mu_3) \wedge \mathsf{G}_{[a_5, b_5]}\mu_4$. The sTLT thus is constructed for $\hat{\varphi} = \mathsf{F}_{[0, 15]}\mathsf{G}_{[2, 10]}\mu_1 \vee  \mathsf{F}_{[0, 15]}(\mathsf{G}_{[0, 10]}\mu_2  \wedge \mathsf{F}_{[5, 10]}\mu_3) \wedge \mathsf{G}_{[a_5, b_5]}\mu_4$.  The implementation can be done by adding an extra barrier condition corresponding to $\mathsf{G}_{[a_5, b_5]}\mu_4$ into  \eqref{eq:online_syn_lth_group}.
\end{remark}

Now we summarize our proposed solution in the following theorem.
\begin{theorem} \label{thm3}
    Consider a dynamical system \eqref{x0} and a nested STL specification $\varphi$. Let the sTLT constructed according to Section \ref{subsec:treeconstruction}. If the initial condition $x_0\in \mathbb{X}_{\text{root}}$ and the online program is feasible, then the resulting system trajectory $(\bm x,0)\models \varphi$.
\end{theorem}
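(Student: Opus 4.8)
The plan is to combine Theorem \ref{thm2} (the under-approximation relation $(\bm{x},0)\cong\mathcal{T}_{\hat\varphi}\Rightarrow(\bm{x},0)\models\varphi$) with Proposition \ref{prop:cbf_for_complete_path} (CBF satisfaction along a complete path) and Remark \ref{rem:treeSatisfy} (how sTLT satisfaction decomposes over complete paths / branches). Concretely, it suffices to show that, under the stated hypotheses, the trajectory generated by the online program satisfies $\mathcal{T}_{\hat\varphi}$; the conclusion $(\bm{x},0)\models\varphi$ then follows immediately from Theorem \ref{thm2}.

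First I would treat the case where $\varphi$ contains no $\vee$ operator. By Remark \ref{rem:treeSatisfy}(1), $(\bm{x},0)\cong\mathcal{T}_{\hat\varphi}$ iff $(\bm{x},0)$ satisfies every complete path of $\mathcal{T}_{\hat\varphi}$. Fix an arbitrary complete path $\bm{p}$ with its sequence of temporal fragments $f_0,f_1,\ldots,f_N$ and associated CBFs $\mathfrak{b}_0,\ldots,\mathfrak{b}_N$. Because the online program \eqref{eq:online_control_wo_disjunction} enforces the CBF inequality for \emph{every} $f_i\in\Pi$ whenever $\theta_i(t)=1$ (i.e.\ on the current time domain $[\underline{t}_{\mathfrak{b}_i},\bar{t}_{\mathfrak{b}_i}]$), each $\mathfrak{b}_i$ satisfies \eqref{eq:cbf_condition} on its time domain along the realized trajectory. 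Since $x_0\in\mathbb{X}_{\text{root}}$ and, by the construction and conditions 1)–3) on the CBFs, $\mathfrak{b}_0(x_0,0)\ge 0$, the hypotheses of Proposition \ref{prop:cbf_for_complete_path} are met for $\bm{p}$; hence the trajectory satisfies $\bm{p}$. As $\bm{p}$ was arbitrary, $(\bm{x},0)\cong\mathcal{T}_{\hat\varphi}$, and Theorem \ref{thm2} gives $(\bm{x},0)\models\varphi$.

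For the case where $\varphi$ contains $\vee$ operators, by Remark \ref{rem:toplayer} the $\vee$ nodes sit only in the top layers, and by Remark \ref{rem:treeSatisfy}(2) it suffices that $(\bm{x},0)$ satisfy one branch of complete paths. The online program \eqref{eq:online_syn_lth_group} enforces the CBF constraints exactly for the temporal fragments $f_i\in\Pi_l$ belonging to the chosen branch; applying the argument above to each complete path within that branch (again invoking $x_0\in\mathbb{X}_{\text{root}}$ to seed $\mathfrak{b}_0(x_0,0)\ge 0$ and Proposition \ref{prop:cbf_for_complete_path}) shows the trajectory satisfies every path of the branch, hence $(\bm{x},0)\cong\mathcal{T}_{\hat\varphi}$, and Theorem \ref{thm2} closes the argument. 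The extension sketched in Remark \ref{rem:extension to full STL} (conjunctions of nested formulas) is handled by appending the extra always-type barrier constraints, which is immediate.

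The main obstacle is verifying the interlocking of time domains so that Proposition \ref{prop:cbf_for_complete_path} actually applies after the \emph{online} updates of the start-time intervals. Specifically, one must check that the event-triggered updates of Algorithm \ref{alg:onlineupdate} preserve the ordering $\underline{t}_{\mathfrak{b}_j}\le\underline{t}_{\mathfrak{b}_i}\le\bar{t}_{\mathfrak{b}_j}\le\bar{t}_{\mathfrak{b}_i}$ of Lemma \ref{lem:time_sequence} between a fragment and its predecessor, so that condition 3) ($\mathfrak{b}_i(x,\underline{t}_{\mathfrak{b}_i})\ge 0$ on $\{x:\mathfrak{b}_j(x,\underline{t}_{\mathfrak{b}_i})\ge 0\}$) remains well-posed and the inductive handoff $\mathfrak{b}_i\ge 0\Rightarrow\mathfrak{b}_{i+1}(\bm{x}(\underline{t}_{\mathfrak{b}_{i+1}}),\underline{t}_{\mathfrak{b}_{i+1}})\ge 0$ goes through at the updated times; the translation-in-time observation of Remark \ref{rem:online_updates} is what makes the updated barriers legitimate CBFs on the shifted domains. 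I would state this explicitly as the key lemma feeding the induction, and note that feasibility of the online program is assumed precisely so that a locally Lipschitz $u$ realizing all active CBF inequalities exists at each instant.
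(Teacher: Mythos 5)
Your proposal is correct and follows essentially the same route as the paper, whose entire proof of Theorem \ref{thm3} is the single sentence ``The proof follows from Proposition \ref{prop:cbf_for_complete_path} and Theorem \ref{thm2}.'' You have simply filled in the details the paper leaves implicit --- the case split on $\vee$, the path-by-path application of Proposition \ref{prop:cbf_for_complete_path}, and the (genuinely worth flagging) check that the online updates of Algorithm \ref{alg:onlineupdate} preserve the time-domain ordering of Lemma \ref{lem:time_sequence} so that the inductive handoff still applies.
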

\begin{proof}
The proof follows from Proposition \ref{prop:cbf_for_complete_path} and Theorem \ref{thm2}.
\end{proof}

\begin{remark}[Nominal control as heuristics]
    In literature dedicated to studying CBFs \cite{ames2016control}, it is common to incorporate nominal  controls to improve overall performance. This is usually done by replacing the weighted quadratic cost in  \eqref{eq:online_control_wo_disjunction} and \eqref{eq:online_syn_lth_group} to the form $(u-u_{nom})^T Q (u-u_{nom}) $, where $u_{nom}$ is usually designed based on heuristics. More details on designing $u_{nom}$ will be detailed in Case Studies.
 \end{remark}

\begin{remark}[Online feasibility] \label{rem:online_qp}
Although we require each $\mathfrak{b}_i$ to be a valid CBF, in general there is no guarantee that they are compatible \cite{xiao2022compatibility}, i.e., the online programs in  \eqref{eq:online_control_wo_disjunction} and  \eqref{eq:online_syn_lth_group} are feasible for all $(x,t)$. When the system is control-affine, the feasibility of QPs is guaranteed when the time domains of individual CBFs do not overlap. In the general case, one can verify or falsify the compatibility of multiple CBFs \textit{a priori} using the method from \cite{xiao2022compatibility}. More detailed discussions are given in Case Studies with several empirical remedies.
\end{remark}

\subsection{Computational complexity}

The computational complexity of the overall approach involves offline and online computational complexities. The offline phase is composed of 1) the construction of the sTLT and 2) the design of a CBF for each temporal fragment of the sTLT.

\textbf{Construction of sTLT:} Given an STL formula $\varphi$ in desired form with $K$ operators, the constructed sTLT contains at most $3K+1$ nodes ($K$ operator nodes and at most $2K+1$ set nodes). The bottleneck for constructing the sTLT, however, is the computation of set nodes, which involves computing maximal or minimal reachable sets (defined in Definitions \ref{Def:maxreachset} and \ref{Def:minreachset}) for the continuous-time dynamical systems under consideration. In the case of a linear continuous-time system, one can compute reachable sets efficiently for large-scale linear systems with several thousand state variables for bounded, but arbitrarily varying inputs \cite{althoff2019reachability}. In the case of a nonlinear continuous-time system, the computation of backward reachable sets is in general undecidable \cite{althoff2021set}. Fortunately, over the past decade, new approaches (e.g., decomposition approach \cite{chen2018decomposition} and deep learning approach \cite{bansal2021deepreach} and software tools (e.g.,
Hamilton-Jacobi Toolbox \cite{mitchell2005toolbox}
and CORA Toolbox \cite{althoff2015introduction}), have been developed for improving the efficiency of computing backward reachable sets. Once the sTLT is constructed, the design of a CBF further requires calculating the start time interval and duration of the set node in the corresponding temporal fragment (i.e., Algorithm 3). The complexity of Algorithm 3 is $\mathcal{O}(1)$. 

\textbf{Construction of CBFs:} The construction of CBFs can be computationally expensive for general nonlinear systems. Luckily, there are several remedies to simplify the computations. In view of the satisfaction condition of an sTLT, we could always {construct a CBF based on an under-approximation of the set node
in the corresponding temporal fragment when the exact reachable sets are difficult to calculate.} Moreover, if the system is fully actuated, the CBF can in general be constructed analytically. One such example is the single integrator dynamics shown in the Case Studies. Other approaches include sum-of-squares techniques \cite{wang2022safety}, learning-based approaches\cite{abate2021fossil}, and HJB reachability-based approaches \cite{wiltz2023construction}. In particular, we highlight that the construction of CBFs through HJB reachability analysis is a byproduct of computing the maximal/minimal reachable sets, which are essential for building the sTLT. The HJB reachability approach is also demonstrated in the Case Studies with unicycle dynamics.

\textbf{Online computations:} The online phase is composed of 1) the online update of the CBFs and 2) solving the optimization program (\ref{eq:online_control_wo_disjunction}) or (\ref{eq:online_syn_lth_group}). As pointed out in Remark \ref{rem:online_updates}, a simple translation in time is sufficient for updating the CBFs. Therefore, the complexity of this step is determined by online updating the start time intervals of set nodes (i.e., Algorithm 4), which is $\mathcal{O}(1)$. The complexity of the optimization program (\ref{eq:online_control_wo_disjunction}) or (\ref{eq:online_syn_lth_group}) is determined by the system model. When the continuous-time dynamical system (\ref{x0}) is control-affine, i.e., (\ref{x0}) is of the form (\ref{control-affine}), the programs (\ref{eq:online_control_wo_disjunction}) and (\ref{eq:online_syn_lth_group}) are QPs.

\section{Case studies}
In this section, we explain the explicit procedures to construct CBFs and formulate the online QP for the nested STL specification given in Example \ref{example1}. It is worth noting that the developed theory is dynamics agnostic. We will show this by designing control synthesis schemes for both single-integrator dynamics and unicycle dynamics, where analytical and numerical CBFs are constructed, respectively. In the end of this section, we demonstrate the efficacy of our proposed method under a more complex STL specification. All the implementation code can be found at \url{https://github.com/xiaotan-git/sTLT}. 

\subsection{Single integrator model}

Consider a mobile robot with a single-integrator dynamics 
\begin{equation} \label{eq:single integrator}
 \dot{x}=u,
\end{equation}
where $x=(x_1,x_2)\in \mathbb{R}^2$ and $u = (u_1, u_2)\in U\subset \mathbb{R}^2$, and the control input set $U=\{u: |u_1|\le 1, |u_2|\leq 1\}$. The STL task specification is given by $\varphi=\mathsf{F}_{[0, 15]}(\mathsf{G}_{[2, 10]}\mu_1 \vee \mu_2  \mathsf{U}_{[5, 10]}\mu_3)$ (the same as in Example \ref{example1}), where $\mathbb{S}_{\mu_1}=\{x\in \mathbb{R}^2\mid  (x_1+4)^2+(x_2+4)^2\leq 1\}$,  $\mathbb{S}_{\mu_2}=\{x\in \mathbb{R}^2\mid  (x_1-4)^2+x_2^2\leq 4^2\}$, and $\mathbb{S}_{\mu_3}=\{x\in \mathbb{R}^2\mid (x_1-1)^2+(x_2+4)^2\leq  2^2 \}$. Recall from Example \ref{example1}, the sTLT $\mathcal{T}_{\hat\varphi}$ is plotted in Fig. \ref{Fig:example}. 

One observation is that, for single integrator dynamics and a given set node $\mathbb{X}_{\varphi_1}$, the sets $\mathcal{R}^M(\mathbb{X}_{\varphi_1}, [a, b])$ and  $\overline{\mathcal{R}^m(\overline{\mathbb{X}_{\varphi_1}}, [a, b])}$, which are the set nodes obtained using the temporal operators $\mathsf{F}_{[a, b]}$ and $\mathsf{G}_{[a, b]}$ respectively, are monotonic increasing with  respect to the input set $U$. Thus, to simplify the set calculation, we calculate subsets of the reachable sets by shrinking the input set $U$ to $U^\prime = \{ u: \| u \|\leq 1\}$. Then one can get that
\begin{equation}
    \begin{aligned}
      & \mathbb{X}_5=\mathbb{S}_{\mu_1}, \mathbb{X}_8=\mathbb{S}_{\mu_2}, \mathbb{X}_9=\mathbb{S}_{\mu_3},\\
  & \mathbb{X}_3=\{x\in \mathbb{R}^2\mid  (x_1+4)^2+(x_2+4)^2\leq 3^2\},\\
  & \mathbb{X}_4=\{x\in \mathbb{R}^2\mid  (x_1-4)^2+x_2^2\leq 4^2\}, \\
  & \mathbb{X}_1=\{x\in \mathbb{R}^2\mid  (x_1+4)^2+(x_2+4)^2\leq 18^2\}, \\
  & \mathbb{X}_2=\{x\in \mathbb{R}^2\mid  (x_1-4)^2+x_2^2\leq 19^2\},\\
  & \mathbb{X}_0=\{x\in \mathbb{R}^2\mid  (x_1+4)^2+(x_2+4)^2\leq 18^2  \text{ or }  \\
  & \hspace{3cm} (x_1-4)^2+x_2^2\leq 19^2\}. \\
    \end{aligned}
\end{equation}
Here $\mathbb{X}_0, ...,\mathbb{X}_5$
are subsets of what one could obtain with the input set $U$. Yet the under-approximation relation still holds in view of the iv)th condition in Definition \ref{Def:PathSaf}. Here we note that although the sets $\mathbb{X}_0,\mathbb{X}_1,\mathbb{X}_2$ are not needed for CBF design (since they do not correspond to any temporal fragments), they still play an important role that will become clear later. The sets $\mathbb{X}_3, \mathbb{X}_4, \mathbb{X}_5,\mathbb{X}_8,\mathbb{X}_9$ are depicted in Fig. \ref{fig:traj single integrator}.

Denote the temporal fragments $f_1=\mathsf{F}_{[0, 15]} \mathbb{X}_3, f_2=\mathsf{G}_{[2, 10]} \mathbb{X}_{5}, f_3=\mathsf{F}_{[0, 15]} \mathbb{X}_4, f_4=\mathsf{G}_{[0, 10]} \mathbb{X}_{8}, f_5=\mathsf{F}_{[5, 10]} \mathbb{X}_{9} $ and their corresponding control barrier functions $\mathfrak{b}_1, ..., \mathfrak{b}_5$. Using Algorithm \ref{alg:timeInterval} and \eqref{eq:timedomain}, one obtainsthe initial starting time interval, the duration, and the time domain of the corresponding CBFs:
\begin{itemize} 
\itemindent=-12pt
     \item $[\underline{t}_s(\mathbb{X}_{3}), \bar{t}_s(\mathbb{X}_{3})]=[0, 15], \mathcal{D}(\mathbb{X}_{3}) = 0, [\underline{t}_{\mathfrak{b}_1},\bar{t}_{\mathfrak{{b}}_1}] = [0, 15]$;
	\item $[\underline{t}_s(\mathbb{X}_{5}), \bar{t}_s(\mathbb{X}_{5})]=[2, 17], \mathcal{D}(\mathbb{X}_{5}) = 8,[\underline{t}_{\mathfrak{b}_2},\bar{t}_{\mathfrak{{b}}_2}] = [2, 25]$;
	\item $[\underline{t}_s(\mathbb{X}_{4}), \bar{t}_s(\mathbb{X}_{4})]=[0, 15], \mathcal{D}(\mathbb{X}_{4}) = 0,[\underline{t}_{\mathfrak{b}_3},\bar{t}_{\mathfrak{{b}}_3}] = [0, 15]$;
	\item $[\underline{t}_s(\mathbb{X}_{8}), \bar{t}_s(\mathbb{X}_{8})]=[0, 15], \mathcal{D}(\mathbb{X}_{8}) = 10,[\underline{t}_{\mathfrak{b}_4},\bar{t}_{\mathfrak{{b}}_4}] = [0, 25]$;
	\item $[\underline{t}_s(\mathbb{X}_{9}), \bar{t}_s(\mathbb{X}_{9})]=[5, 25], \mathcal{D}(\mathbb{X}_{9}) = 0,[\underline{t}_{\mathfrak{b}_5},\bar{t}_{\mathfrak{{b}}_5}] = [5, 25]$.
\end{itemize}

Taking into account the velocity limit, we design the initial CBFs as
\begin{equation}  \label{eq:cbfs_single_integrator}
\begin{aligned}
    & \mathfrak{{b}}_1(x,t) =  (18-t)^2 - (x_1+4)^2-(x_2+4)^2, t\in [0,15]; \\
    & \mathfrak{{b}}_2(x,t) =  \begin{cases}
    (18-t)^2 - (x_1+4)^2-(x_2+4)^2, t\in [2,17]; \\
    1^2 - (x_1+4)^2-(x_2+4)^2, t\in [17,25];
    \end{cases} \\
     & \mathfrak{{b}}_3(x,t) =  
    (19-t)^2 - (x_1-4)^2-x_2^2, t\in [0,15]; \\
    & \mathfrak{{b}}_4(x,t) =  \begin{cases}
    (19-t)^2 - (x_1-4)^2-x_2^2, t\in [0,15]; \\
        4^2 - (x_1-4)^2-x_2^2, t\in [15,25];
    \end{cases} \\
    & \mathfrak{{b}}_5(x,t) =  (27-t)^2 - (x_1 - 1)^2 - (x_2+4)^2, t\in [5,25].
\end{aligned}
\end{equation}
It is evident that the zero super-level sets of the barriers are circular, which either remain static or shrink in radius at a velocity of $1$. If the robot is about to leave the safe region, i.e., when $\mathfrak{{b}}_i(x,t) = 0$, the robot can always steer itself towards the center with unit velocity, and thus always stay safe.
One could easily verify that, for $i= 1,2,...,5$, 1) $\mathfrak{b}_i(x, t)$ is a valid CBF for the single integrator dynamics in \eqref{eq:single integrator}; 2)  $\mathfrak{b}_i(x, t) = h_{\mathbb{X}_{f_i}}(x), \forall t\in [\bar{t}_s(\mathbb{X}_{f_i}), {t}_e(\mathbb{X}_{f_i})]$, where $\mathbb{X}_{f_i}$ is the set node in the corresponding temporal fragment $f_i$; 3) $  \mathfrak{b}_i(x,\underline{t}_{\mathfrak{b}_i}) \ge \mathfrak{b}_j(x,\underline{t}_{\mathfrak{b}_i}), \forall x$, where the corresponding temporal fragment $f_j$ is the predecessor of $f_i$. Thus, CBFs in \eqref{eq:cbfs_single_integrator} fulfill the conditions in Sec. III.D. Note that here we calculate the initial CBFs, which will be updated online according to Algorithm \ref{alg:onlineupdate} and Remark \ref{rem:online_updates}.

Since the nested STL formula contains $\vee$ operator, we need to determine which branch out of two branches $ \{\bm{p}_1\}$ and $\{\bm{p}_2, \bm{p}_3\}$ (as in Example \ref{ex:complete paths}) needs to be satisfied. The guideline to choose the branch is as follows: if the initial condition $x_0 \in \mathbb{X}_1 $, we can choose $\Pi_l = \{ f_1,f_2\}$; if $x_0 \in \mathbb{X}_2 $, we can choose $\Pi_l = \{ f_3,f_4, f_5\}$; if $x_0 \notin \mathbb{X}_0$, then the proposed scheme fails to generate a control signal with correctness guarantee and a larger input bound is expected.

It is worth highlighting that in the special case of $\Pi_l = \{ f_1, f_2\}$, the feasibility of QP is guaranteed since for the time domains that $\mathfrak{b}_1$ and $\mathfrak{b}_2$ overlap, they pose the same CBF condition, so only one CBF is active at every time instant. We also observe that, empirically, the feasibility problem can be mitigated by further shrinking the input set $U^\prime$ or enlarging the class $\mathcal{K}$ functions in the QP, for example, by increasing the gain when it is linear. To incorporate heuristics in the control synthesis scheme, in this section, we choose $u_{nom}(t) $ in a way that guides the trajectory towards fulfilling the CBF $\mathfrak{b}_i$ with the smallest $\bar{t}_{\mathfrak{b}_i}$  among all active ones. Several other heuristics are also implemented in the code.

Now we demonstrate the numerical results with the proposed CBF-based QP control synthesis scheme. In Fig. \ref{fig:traj single integrator}, we illustrate  trajectories with time snapshots starting from $(-6,2)$ and $(-2,3.5)$, both of which lie within $\mathbb{X}_1 \cap \mathbb{X}_2$. Here we set the $\alpha_i$ in \eqref{eq:online_syn_lth_group} to be $\alpha_i(v) = v, v\in \mathbb{R}, \forall i$, and $Q$ in \eqref{eq:online_syn_lth_group} to be an identity matrix. For all the trajectories, the input bound $U$ is respected. We observe that every trajectory satisfies the STL specification $\varphi$. If we take the initial condition $x_0 = (-20,-5)$, $x_0\in \mathbb{X}_1$ and $x_0\notin \mathbb{X}_2$, we observe that the STL specification is fulfilled if we choose the branch $\{ \bm{p}_1\}$; yet the online QP becomes infeasible if we choose the branch $\{ \bm{p}_2, \bm{p}_3\}$. This is in line with the theoretical results.

\begin{figure} 
    \centering
    \includegraphics[width=0.9\linewidth]{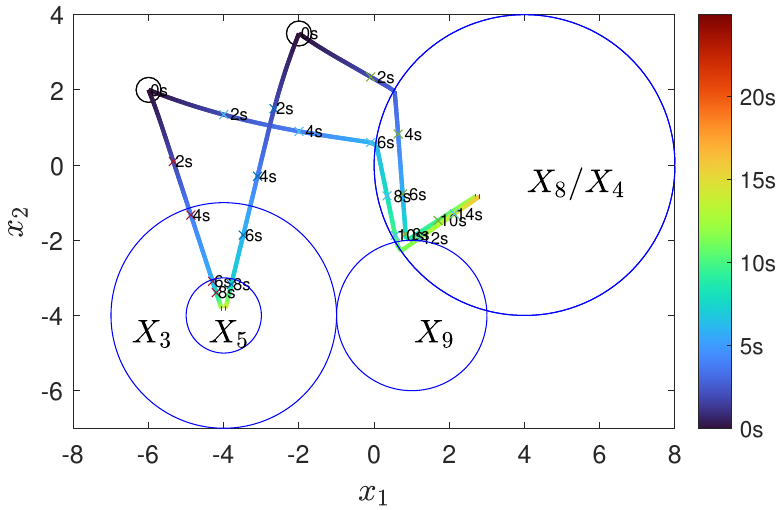}
    \caption{ Four trajectories of a mobile robot with single integrator dynamics are synthesized using the proposed method under the STL specification $\varphi=\mathsf{F}_{[0, 15]}(\mathsf{G}_{[2, 10]}\mu_1 \vee \mu_2  \mathsf{U}_{[5, 10]}\mu_3)$.  Two different starting positions (marked in circles) are tested and every trajectory satisfies the STL specification $\varphi$. It is observed that the robot starts its voyage from $0$s and approaches the regions of interest without any stop. This was enabled by the design of the nominal controller and the online updates of the start time intervals of the set nodes. For the case that branch $\{ {\bm p}_2, {\bm p}_3\}$ is chosen, we note that the trajectories leave $\mathbb{X}_9$ after reaching it. This behavior is due to the approximation gap as we require the trajectories to stay inside $\mathbb{X}_8$ after reaching $\mathbb{X}_9$ for the constructed sTLT. 
    }
    \label{fig:traj single integrator}
\end{figure}

When dealing with regions of irregular shapes or general nonlinear dynamics, the set nodes as well as the CBFs are difficult to calculate analytically. In the following we show a numerical construction scheme.

\subsection{Unicycle model}

Consider a mobile robot with a unicycle dynamics 
\begin{equation} \label{eq:unicycle}
    \begin{aligned}
    \dot{x}_1&=v\cos\theta, \\
\dot{x}_2&=v\sin\theta, \\
\dot{\theta}&=\omega,
    \end{aligned}
\end{equation}
where the state $x = (x_1,x_2, \theta)$,  the control input $u=(v,\omega)$. Here $(x_1,x_2)$ denotes the position, $\theta$ the heading angle, and $v$ the velocity, $\omega$ the turning rate. We assume that the control input $u=(v,\omega) \in U = \{ u\mid |v|\le 1, |\omega|\leq 1\}$. The STL task specification is again given by $\varphi=\mathsf{F}_{[0, 15]}(\mathsf{G}_{[2, 10]}\mu_1 \vee \mu_2  \mathsf{U}_{[5, 10]}\mu_3)$ (the same as in Example \ref{example1}), where $\mathbb{S}_{\mu_1}=\{x\in \mathbb{R}^2 \times S^1\mid  (x_1+4)^2+(x_2+4)^2\leq 1\}$,  $\mathbb{S}_{\mu_2}=\{x\in \mathbb{R}^2 \times S^1\mid  (x_1-4)^2+x_2^2\leq 4^2\}$, and $\mathbb{S}_{\mu_3}=\{x\in \mathbb{R}^2 \times S^1 \mid (x_1-1)^2+(x_2+4)^2\leq  2^2 \}$. Recall from Example \ref{example1}, the sTLT $\mathcal{T}_{\hat\varphi}$ is plotted in Fig. \ref{Fig:example}.

We note that the temporal fragments, their time encodings,  the time domains for the barrier functions, and the branch choosing guidelines are similar to those as in the case of single integrator dynamics and thus omitted here. We will instead explain how the set nodes as well as the barrier functions are constructed through the use of a level-set reachability analysis toolbox \cite{mitchell2005toolbox,fisac2015reach}. 

Here the  set nodes with the  input set $ U = \{ u\mid |v|\le 1, |\omega|\leq 1\}$  are computed via reachability analysis. We may also use a shrinked input set to mitigate the online QP infeasibility issue. In brevity, we numerically obtain the value function $h_{\mathbb{X}_i}$ for the sets $\mathbb{X}_i, i = 0,1,..,9,$ following the reachability operations in Example \ref{example1}.  The projection of sets $\mathbb{X}_3, \mathbb{X}_4, \mathbb{X}_5,\mathbb{X}_8,\mathbb{X}_9$ to the first two dimensions are depicted in Fig. \ref{fig:traj unicycle}. 

\begin{figure} 
    \centering
    \includegraphics[width=0.9\linewidth]{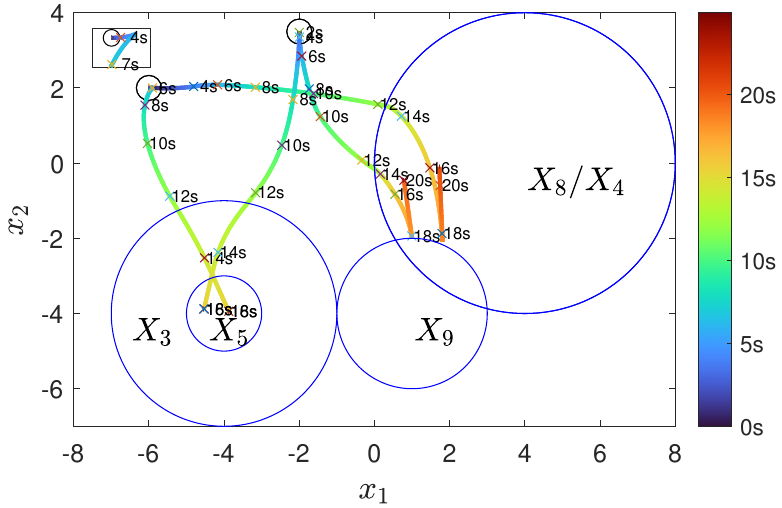}
    \caption{Four trajectories of a mobile robot with unicycle dynamics are synthesised using the proposed method under the STL specification $\varphi=\mathsf{F}_{[0, 15]}(\mathsf{G}_{[2, 10]}\mu_1 \vee \mu_2  \mathsf{U}_{[5, 10]}\mu_3)$.  Two different starting configurations (marked in circles) are tested and every trajectory satisfies the STL specification $\varphi$. It is observed that the robot first adjusts its orientation and then starts to approach regions of interest without any stop. For the case that branch $\{ {\bm p}_2, {\bm p}_3\}$ is chosen, we also see that the trajectories leave $\mathbb{X}_9$ after reaching it. }

    \label{fig:traj unicycle}
\end{figure}

Now we show how the CBFs are constructed. Take the construction of $\mathfrak{b}_2$ as an example, which corresponds to  $f_2 = \mathsf{G}_{[2, 10]} \mathbb{X}_{5}$. Recall $\mathbb{X}_{5} = \{x\mid h_{\mathbb{X}_5}(x) \ge 0 \},$ $[\underline{t}_s(\mathbb{X}_{5}), \bar{t}_s(\mathbb{X}_{5})]=[2, 17]$, $[\underline{t}_{\mathfrak{b}_2},\bar{t}_{\mathfrak{{b}}_2}] = [2, 25]$.  Here the function $\mathfrak{b}_2$ is expected to be  a valid control barrier function for the unicycle dynamics in \eqref{eq:unicycle} which guides $\bm{x}(t)$ towards the set $\mathbb{X}_5$ for $t\in [2,17]$ and keeps $\bm{x}(t)$ in the set  $\mathbb{X}_5$ for $t\in [17,25]$. We construct such a $\mathfrak{b}_2$ by solving the following optimal control problem:
\begin{equation}
    \begin{aligned}
     & V(x,t) = \max_{{\bm u}(s), s \in [t,17]} h_{\mathbb{X}_5}({\bm x}_{x,t}^{\bm u} (17)) \\
     & \text{ s.t. } \hspace{0.5cm} \eqref{eq:unicycle} \text{ and } {\bm u}(s) \in U,
    \end{aligned}
\end{equation}
where $\bm{x}_{x,t}^{\bm u}$ denotes the continuous state signal starting from $x$ at time $t$ with the input signal ${\bm u}$. $V(x,t)$ can be computed  numerically by solving the following Hamilton-Jacobi-Bellman (HJB) equation \footnote{ In general, if $h_{\mathbb{X}_5}(x)$ is Lipschitz continuous but not smooth, then only the viscosity solution can be obtained from the HJB equation and in this case $V(x,t)$ is Lipschitz continuous, which is differentiable almost everywhere. }
\begin{equation*}
    \begin{aligned}
    & \frac{\partial V}{ \partial t} + \max_{u\in U}  \langle \nabla_x V(x,t), f(x,u) \rangle  = 0, \\
    & V(x,17) = h_{\mathbb{X}_5}(x).
    \end{aligned}
\end{equation*}
Thus, we choose $\mathfrak{b}_2(x,t)= \begin{cases} 
V(x,t), & t\in [15,17]; \\
h_{\mathbb{X}_5}(x), & t \in [17,25].
\end{cases}$ One can verify that $\mathfrak{b}_2(x,t)$ is a valid CBF as per Definition \ref{def:cbf}. The remaining barrier functions $\mathfrak{b}_1, \mathfrak{b}_3, \mathfrak{b}_4, \mathfrak{b}_5$ are constructed in a similar manner.

The numerical results with the proposed scheme for unicycle dynamics are shown in Fig. \ref{fig:traj unicycle}. Here we illustrate the trajectories with time snapshots starting from $(-6,2,0)$ and $(-2,3.5,\pi/2)$, both of which lie within $\mathbb{X}_1 \cap \mathbb{X}_2$. Here the $\alpha_i$ in \eqref{eq:online_syn_lth_group} is set to be $\alpha_i(v) = v, v\in \mathbb{R}, \forall i$, and $Q$ in \eqref{eq:online_syn_lth_group} an identity matrix. An intuitive nominal controller similar to the single integrator case is also utilized in this example. For all the trajectories, the input bound $U$ is respected. Again, we observe that every trajectory satisfies the STL specification $\varphi$.

\subsection{Examples for more complex specifications}

In this subsection, we consider the more complex STL formula below
\begin{multline}\label{eq:complex_STL}
    \varphi = \mathsf{G}_{[0,1]} \mathsf{F}_{[2,3]} \mu_1 \wedge \mathsf{F}_{[6,7]}\mathsf{G}_{[1,2]} \mu_2 
    \wedge \mathsf{F}_{[13,14]} (\mu_3 \mathsf{U}_{[1,4]} \mu_1) \\ \wedge \mathsf{G}_{[0,20]} \neg \mu_4 \wedge \mathsf{F}_{[15,20]}\mu_5 .
\end{multline}

\begin{figure}[t]
    \centering
    \includegraphics[width = 0.9\linewidth]{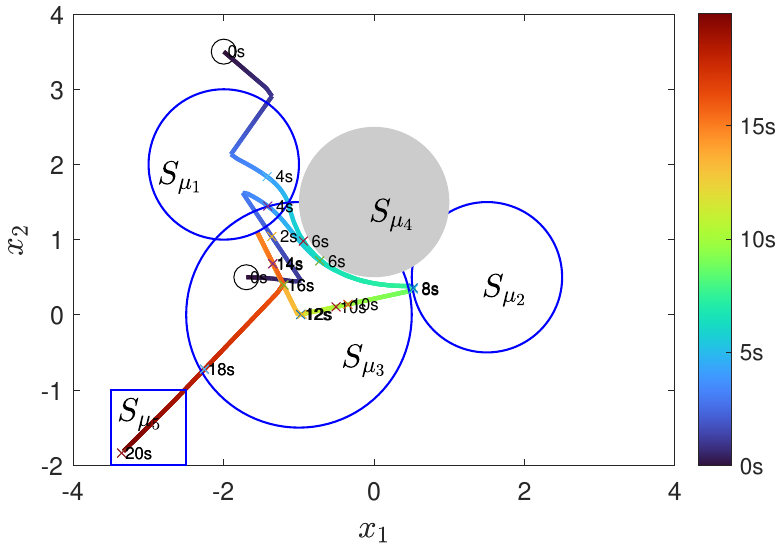}
    \caption{ Synthesised trajectories of a mobile robot with single integrator dynamics in \eqref{eq:single integrator} under the STL task in \eqref{eq:complex_STL}. Here $S_{\mu_1}, S_{\mu_2}, S_{\mu_3}$, $S_{\mu_4}$, and $S_{\mu_5}$ represent the regions in which  the corresponding predicate functions are evaluated to be true. Two different starting positions (marked in circles) are tested and shown in the figure. For both trajectories, it is observed that the robot successfully follows the STL specifications and visits regions of interest in the specified time intervals, and always avoids the obstacle region.} Thus, both trajectories satisfy the STL specification $\varphi$.
    \label{fig:traj_single_integrator_complex}
\end{figure}

\begin{figure}
    \centering
    \includegraphics[width = 0.9\linewidth]{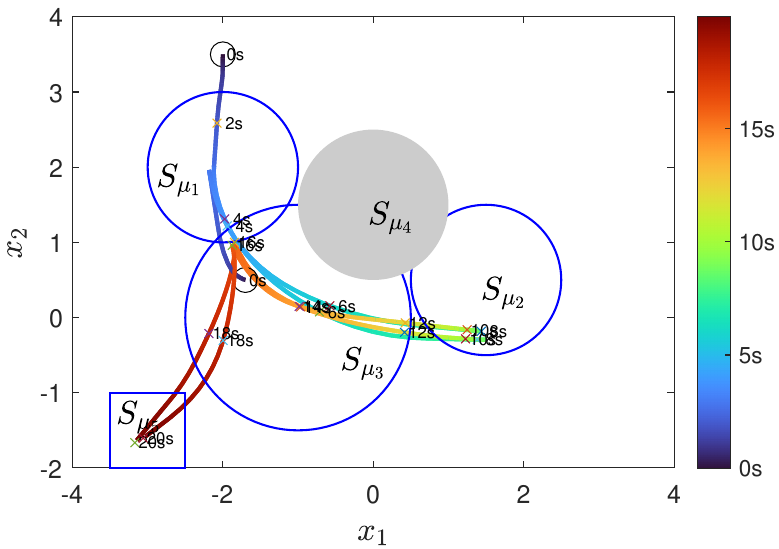}
    \caption{ Synthesised trajectories of a mobile robot with unicycle dynamics in \eqref{eq:unicycle} under the STL task in \eqref{eq:complex_STL}. Here $S_{\mu_1}, S_{\mu_2}, S_{\mu_3}$ and $S_{\mu_4}$ represent the projected regions in which the corresponding predicate functions are evaluated to be true. Two different starting configurations (marked with car-like symbols) are tested and shown in the figure. For both trajectories, it is observed that robot first adjusts its orientation and goes into regions of interest in the specified time intervals, and always avoids the region $S_{\mu_4}$.  Thus, both trajectories satisfy the STL specification $\varphi$. } 
    \label{fig:traj_unicycle_complex}
\end{figure}

The control synthesis process consists of constructing the corresponding sTLT, calculating the set nodes using reachability analysis, calculating the time encodings, and constructing the corresponding CBFs. This offline design process is similar to what we have detailed before, except that the region associated with the predicate $\mu_5$ is a square. We take two different approaches: in the case of single integrator dynamics (Fig. \ref{fig:traj_single_integrator_complex}), we use the largest inscribed circular region to under-approximate $\mathbb{S}_{\mu_5}$, and analytical CBFs are constructed; in the case of unicycle dynamics (Fig. \ref{fig:traj_unicycle_complex}), we use the signed distance function of the square as the superlevel set function and calculate the value function to the HJB equation as the CBF. Implementation details can be found in the online code repository. For the online synthesis, since the formula does not contain $\vee$, the QP in \eqref{eq:online_control_wo_disjunction} will be used.   Figure \ref{fig:traj_single_integrator_complex} and Figure \ref{fig:traj_unicycle_complex} demonstrate the resulting system behaviors for a mobile robot with single integrator dynamics in \eqref{eq:single integrator} and with the unicycle dynamics in \eqref{eq:unicycle}, respectively. We note that all  trajectories satisfy the STL specification in (\ref{eq:complex_STL}), while respecting the dynamics and input bounds.

\section{Conclusions}
In this paper, we develop an efficient control synthesis approach for continuous-time dynamical systems under nested STL specifications. To this purpose, we introduce a notion of signal temporal logic tree (sTLT), detail on its construction from a given STL formula, its semantics (i.e., satisfaction condition), and the equivalence or under-approximation relation between the sTLT and the STL formula. Under the guidance of the sTLT, we show how to design CBFs and online update their activation time intervals. The control signal is thus given by an online CBF-based program. For future work, we will tackle the motion coordination problem of multi-agent systems under STL specifications leveraging task decomposition and distributed CBF techniques.

\bibliographystyle{IEEEtran}
\bibliography{references}

\end{document}